\documentclass{article}

\usepackage{arxiv}
\usepackage{color}
\usepackage{overpic}
\usepackage{subfig}
\usepackage{hyperref}
\usepackage{graphicx}      
\usepackage{amsmath,amssymb,amsfonts, amsthm}
\newtheorem{lemma}{Lemma}
\newtheorem{assum}{Assumption}
\newtheorem{theorem}{Theorem}

\newtheorem{remark}{Remark}
\newtheorem{problem}{Problem}
\newtheorem{subproblem}{Subproblem}
\newtheorem{definition}{Definition}

\newcommand{\cl}[2][black]{\textcolor{#1}{#2}}
 
\title{\LARGE \bf

Rigidity-Aware Formation Tracking under Sensing Range Constraints via Single Control Barrier Function Constraint\\

}

\author{ \href{https://orcid.org/0000-0003-0212-0762}{\includegraphics[scale=0.06]{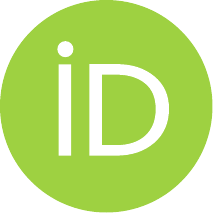}\hspace{1mm}S. Saharsh} \\
	Department of Cyber Physical Systems\\
	Indian Institute of Science\\
	Bengaluru, 560012, India \\
	\texttt{saharsh2021@iisc.ac.in} \\
        \And
	Pushpak Jagtap \\
	Department of Cyber Physical Systems\\
	Indian Institute of Science\\
	Bengaluru, 560012, India\\
	\texttt{pushpak@iisc.ac.in} \\
}

\renewcommand{\shorttitle}{\textit{arXiv} version}
\begin{document}

\maketitle

\begin{abstract}

This paper presents a control framework for formation tracking and rigidity maintenance in heterogeneous multi-robot systems with nonlinear dynamics under sensing range constraints. Since formation tracking alone does not ensure rigidity maintenance with a limited sensing range, despite rigidity being a prerequisite for establishing and preserving a unique formation, our work integrates both objectives through a single Control Barrier Function (CBF)-like constraint within a quadratic optimization framework. The proposed distributed controller requires only local relative information from neighbors, as verified with simulation case studies. 

\end{abstract}

\section{Introduction}
Multi-robot systems (MRS) have emerged as a powerful paradigm for accomplishing complex tasks through cooperation, offering improved robustness, flexibility, and scalability compared with single-robot solutions. In GPS-denied environments such as indoor settings, underwater domains, or deep-space missions, agents must rely on local neighbor based information for coordination and control \cite{franchi2012modeling}. The analysis and stabilization of group formations in MRS, based solely on inter-agent measurements, needs rigid communication graph \cite{zelazo2015decentralized}. Bearing-based formation control, with rigidity formalized through the bearing Laplacian matrix, has consequently emerged as an effective approach for achieving formation objectives under such minimal sensing requirements \cite{zhao2015bearing}.

Lyapunov-based bearing-only controllers \cite{pampatwar2021planar} achieve formation tracking but exhibit limited scalability and adaptability with time-varying communication graph. Quadratic-programming (QP)-based controllers address this limitation by enabling real-time constraint handling, though conflicts between nominal formation objectives and imposed constraints can render the resulting QP infeasible \cite{wu2023quadratic}. Despite significant progress in formation control, most existing works \cite{pampatwar2021planar, zhao2016localizability, rayabagi2024formation} assume that a rigid communication graph is already available and maintained throughout the mission. However, in practical scenarios, agents are often subject to limited sensing ranges \cite{bhatia2025decentralized} and restricted fields of view \cite{dias2016distributed}, making rigidity maintenance itself a challenging problem. Motivated by these limitations, this paper proposes a control-barrier-function (CBF)-only framework that unifies formation tracking and rigidity maintenance, ensuring the network remains rigid even if agents have limited sensing range.



\cl{This paper extends the preliminary results presented in \cite{saharsh2026cbfformation} for
formation tracking in heterogeneous multi-agent systems using the CBF-QP framework, where a
reachability control barrier function (reach-CBF) was introduced to guarantee safe-set invariance
and finite-time target-set reachability within a single-constraint CBF-QP framework, assuming the
communication graph remains rigid for all time. Here, we adopt this reach-CBF-based framework on a
{time-varying}, limited-sensing-range graph, removing the assumption of permanent rigidity and
instead treating rigidity preservation as an explicit control objective. We provide a comprehensive and comparative
case study illustrating the importance of rigidity maintenance for formation tracking, where
communication links are broken or created depending on limited sensing range. \\
Our main contribution is a unified formation tracking and rigidity maintenance strategy for
heterogeneous nonlinear multi-agent systems under limited sensing range, achieved within the single reach-CBF constraint-based quadratic program, using only local relative information
from neighbors.}
\section{Problem Formulation}
\label{sec: prelim and prob defn}
\textit{Notations}. Scalars are denoted as $x \in \mathbb{R}$, and vectors as $\boldsymbol{x}$ (bold letter).  Let $\mathcal{S}$ be the unit circle. 
The closed ball $\mathbb{B}(\boldsymbol{c},\lambda) \subseteq \mathbb{R}^2$ is defined as $\mathbb{B}(\boldsymbol{c}, \lambda) := \{ \boldsymbol{x} \in \mathbb{R}^2 : \|\boldsymbol{x} - \boldsymbol{c}\| \leq \lambda\}$, where $\boldsymbol{c}$ is the center and $\lambda \in \mathbb{R}^{+}$ is the radius. 
For $a, b \in \mathbb{N}$ and $a < b$, $[a,b]_{\mathbb{N}}$ denotes the closed interval in $\mathbb{N}$. All other notation follows standard mathematical conventions.
\subsection{System Description}
We consider a heterogeneous multi-agent system (MAS) consisting of $n$ agents, where each agent is indexed by $i \in [1,\ n]_{\mathbb{N}}$. Each $i^{th}$ agent (modeled as a planar robot) is represented as:
\begin{align}
        \dot{\boldsymbol{p}}_i = \boldsymbol{v}_i, 
        \dot{\boldsymbol{v}}_i = f_{v_i}(\boldsymbol{v}_i) + h_{v_i}(\boldsymbol{v}_i)\boldsymbol{u}_i, \label{sys}
\end{align}
    where $\boldsymbol{p}_i,\boldsymbol{v}_i\in \mathbb{R}^2$ and $\boldsymbol{u}_i\in \mathbb{R}^2$ are the position in the 2D plane, velocity, and input vectors, respectively. The maps $f_{v_i} : \mathbb{R}^2 \times \mathbb{R}^2 \rightarrow \mathbb{R}^2$, $h_{v_i}: \mathbb{R}^2 \times \mathbb{R}^2 \rightarrow \mathbb{R}^{2 \times 2}$ are globally Lipschitz continuous functions with Lipschitz constants $\Gamma_{if}, \Gamma_{ih}$, respectively. \cl{All agents are assumed to share a common orientation reference frame.}

\begin{remark}
    A broad class of planar robotic systems, including simple double-integrator, Ackermann-drive vehicles, differential-drive robots, aerial vehicles in altitude-hold mode, can be represented by \eqref{sys}. Refer \cite{sawarkar2026sliding}[II-B] for derivations.
\end{remark}

\subsection{Graph-based Communication Framework} 
Given a MAS of $n$ agents, its communication topology is modeled by an undirected graph $\mathcal{G}(t)=(\mathcal{V},\mathcal{E}(t))$, where $\mathcal{V}=[1,n]_{\mathbb{N}}$ denotes the set of agents (modeled as \eqref{sys}) and $\mathcal{E}(t)\subseteq\mathcal{V}\times\mathcal{V}$ denotes the set of edges at time $t \in \mathbb{R}_{0}^{+}$, with $(i,j)\in\mathcal{E}(t)$ indicating information exchange between agents $i$ and $j$. We consider that all the agents have the same sensing range $\lambda \in \mathbb{R}^{+}$. The edge condition $(i,j) \in \mathcal{E}(t) \iff \boldsymbol{p}_j(t) \in \mathbb{B}(\boldsymbol{p}_i(t), \lambda) \iff \boldsymbol{p}_i(t) \in \mathbb{B}(\boldsymbol{p}_j(t), \lambda) \iff (j,i) \in \mathcal{E}(t)$, favoring the undirected graph topology. The neighbors of the $i^{th}$ agent are denoted as $\mathcal{N}_{i}$, where $\mathcal{N}_{i}(t) := \{j \in \mathcal{V}: (i,j) \in \mathcal{E}(t)\}$. For a given edge $(i,j) \in \mathcal{E}(t)$, we denote the relative position between agent $i$ and $j$ by $\boldsymbol{p}_{ij} := \boldsymbol{p}_{j} - \boldsymbol{p}_{i}$ and the bearing by $\boldsymbol{g}_{ij} := \frac{\boldsymbol{p}_{ij}}{\|\boldsymbol{p}_{ij}\|}, \forall (i,j) \in \mathcal{E}(t)$.
The bearing configuration $\boldsymbol{g} \in \mathbb{R}^{2|\mathcal{E}|}$ is formed by stacking the bearing vectors $\boldsymbol{g}_{ij}$ for all edges $(i,j) \in \mathcal{E}(t)$.
Given $\boldsymbol{g}_{ij} \in \mathbb{R}^2$, orthogonal projection matrix $P_{\boldsymbol{g}_{ij}} \in \mathbb{R}^{2 \times 2}$ is defined as $P_{\boldsymbol{g}_{ij}} = I_2 \ - \ \boldsymbol{g}_{ij} \boldsymbol{g}_{ij}^{\top}$.
Using the matrix $P_{\boldsymbol{g}_{ij}}$ and $\mathcal{G}(t)$, the bearing Laplacian matrix $\mathcal{B}(t) \in \mathbb{R}^{2n \times 2n}, n = |\mathcal{V}|$ is defined as
\begin{align}\label{bear}
    [\mathcal{B}(t)]_{ij} =
\begin{cases}
{0}_{2}, & i \ne j,\ (i,j) \notin \mathcal{E}(t), \\[8pt]
- P_{\boldsymbol{g}_{ij}}, & i \ne j,\ (i,j) \in \mathcal{E}(t), \\[8pt]
\sum\limits_{k \in \mathcal{N}_i} P_{\boldsymbol{g}_{ik}}, & i = j,\ i \in \mathcal{V}.
\end{cases}\end{align}
The matrix $\mathcal{B}(t)$ is used later in Subsection \ref{subsec:problem formulation} to define the bearing rigidity. 
\cl{We adopt a leader--follower framework, where agents with access to global position information (e.g., via GPS) are designated leaders $\mathcal{V}_L \subset \mathcal{V}$, $n_l = |\mathcal{V}_L|$, and the remaining agents, relying only on local relative measurements, are followers $\mathcal{V}_F = \mathcal{V}\setminus\mathcal{V}_L$, $n_f = n - n_l$.}
The matrix $\mathcal{B}(t)$ for the leader-follower framework can be partitioned into block matrices as 
    $\mathcal{B}(t) = \begin{bmatrix}
\mathcal{B}_{ll}(t) & \mathcal{B}_{lf}(t) \\
\mathcal{B}_{fl}(t) & \mathcal{B}_{ff}(t)
\end{bmatrix}$,
where $\mathcal{B}_{ll} \in \mathbb{R}^{2n_l \times 2n_l}$, $\mathcal{B}_{lf} \in \mathbb{R}^{2n_l \times 2n_f}$,
$\mathcal{B}_{fl} \in \mathbb{R}^{2n_f \times 2n_l}$, $ \mathcal{B}_{ff} \in \mathbb{R}^{2n_f \times 2n_f}$, where $l$ and $f$ represent leaders and followers, respectively.
\subsection{Problem Formulation}
\label{subsec:problem formulation}
The MAS of $n$ agents, connected over a communication graph $\mathcal{G}(t)$ with the leader-follower framework, is required to reach, maintain, and track the desired formation, given by the desired leader positions $\boldsymbol{p}^{d}_{l}, \forall l \in \mathcal{V}_{L}$ and the desired bearing vectors $\boldsymbol{g}_{ij}^{d}$, while preserving rigidity under sensing range $\lambda$. The unique formation of the MAS is characterized by a desired position configuration vector $\boldsymbol{p}^{d} \in \mathbb{R}^{2n}$:
\begin{align}\label{dpos}
   \boldsymbol{p}^{d} := [(\boldsymbol{p}_{1}^{d})^{\top}, \cdots,(\boldsymbol{p}_{n}^{d})^{\top}]^{\top}, 
\end{align}
where $\boldsymbol{p}_{i}^{d}$, $i\in\mathcal{V}$ denote the desired formation position of agent $i$. 
The formation objective is defined below.
\begin{definition} [Unique formation]
    Given a desired position configuration $\boldsymbol{p}^{d}$, with agents (modeled as \eqref{sys}) in MAS, connected as $\mathcal{G}(t)$ under a leader-follower framework, it is said to achieve a \textit{unique formation} if the following holds: $\lim_{t \rightarrow \infty} \| \boldsymbol{p}_i(t) - \boldsymbol{p}_i^{d}\| = 0, \forall i \in \mathcal{V}.$
    \label{uform}
\end{definition}
With the given leader positions $\boldsymbol{p}_l^d, l \in \mathcal{V}_L$, the desired formation in the bearing configuration $\boldsymbol{g}^d$, denoted as $\mathcal{F}(\boldsymbol{g})$, described by the desired bearing vectors $\boldsymbol{g}_{ij}^d := \frac{\boldsymbol{p}_{ij}^{d}}{\|\boldsymbol{p}_{ij}^{d}\|}, \forall (i,j) \in \mathcal{E}(t)$, $\boldsymbol{p}_{ij}^{d}= \boldsymbol{p}_{j}^{d} - \boldsymbol{p}_{i}^{d}$, is achieved if $\lim_{t \rightarrow \infty} \langle\boldsymbol{g}_{ij}(t),\boldsymbol{g}_{ij}^{d}\rangle = 1, \forall (i,j) \in \mathcal{E}(t). $
\begin{assum}[\cite{zhao2016localizability}]
     The initial graph $\mathcal{G}(0)$ is infinitesimally bearing rigid, i.e., $det(\mathcal{B}_{ff}(0)) \neq 0$ with \cl{initial leaders only subgraph $\mathcal{G}_{\mathcal{V}_L}(0):= (\mathcal{V}_L, \mathcal{E}_{L}(0)), \mathcal{E}_{L}(0) = \{(i,j) : i,j \in \mathcal{V}_L, (i,j) \in \mathcal{E}(0)\}$ is also bearing rigid.} 
    \label{A1}
\end{assum}
In contrast to conventional approaches \cite{zhao2015bearing, zhao2019bearing}, which assume bearing rigidity for all time, here bearing rigidity is assumed only initially. Rigidity preservation is treated as a rigidity maintenance objective and established as follows.
\begin{definition}[Rigidity maintenance]\label{def_connectivity}
Consider a MAS connected through a communication graph
$\mathcal{G}(t)=(\mathcal{V},\mathcal{E}(t))$ with agent's sensing range $\lambda \in \mathbb{R}^{+}$. The system is said to maintain rigidity if the communication links
preserve bearing rigidity for all time, i.e., $det(\mathcal{B}_{ff}(t)) \neq 0, \forall t \in \mathbb{R}_{0}^{+}$.
\end{definition}
The transition from a formation $\mathcal{F}(\boldsymbol{g})$ defined in bearing to the unique formation is established next.
\begin{lemma}[\cite{zhao2016localizability}]
    Given a MAS in the leader-follower framework, connected as $\mathcal{G}(t)$, distinct leaders positions as $\boldsymbol{p}_{l}^{d}$, the desired formation is uniquely determined with $\mathcal{F}(\boldsymbol{g})$, i.e., $\forall (i,j) \in \mathcal{E}(t), \lim_{t \rightarrow \infty} \boldsymbol{g}_{ij}(t) = \boldsymbol{g}_{ij}^{d} \Rightarrow \forall i \in \mathcal{V}_{F}, \lim_{t \rightarrow \infty} \boldsymbol{p}_{i}(t) = \boldsymbol{p}_{i}^{d}$, if and only if $\mathcal{B}_{ff}(t)$ is non-singular $\forall t \in \mathbb{R}_{0}^{+}$. In addition, if $\mathcal{B}_{ff}(t)$ is non-singular, then the system has at least two leaders (i.e., $n_l \geq 2$). 
    \label{lem1}
\end{lemma}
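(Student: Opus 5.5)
The plan is to reduce the statement to a linear-algebraic fact about the bearing Laplacian, following \cite{zhao2016localizability}. The key identity is that for any configuration $\boldsymbol{p}$ one has $P_{\boldsymbol{g}_{ij}}\boldsymbol{p}_{ij}=0$ for every edge. Since $\mathcal{B}(t)$ in \eqref{bear} is built from these projections, the stacked vector $\boldsymbol{p}$ then satisfies $\mathcal{B}(t)\boldsymbol{p}=0$, i.e.
\begin{align*}
\mathcal{B}_{fl}\boldsymbol{p}_l+\mathcal{B}_{ff}\boldsymbol{p}_f=0 .
\end{align*}
The same holds for $\boldsymbol{p}^d$ with the Laplacian $\mathcal{B}^d$ built from $\boldsymbol{g}^d$. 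Note that $\mathcal{B}$ depends on $\boldsymbol{p}$ only through the bearings $\boldsymbol{g}_{ij}$, which is what makes everything bearing-determined.

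For sufficiency, assume $\mathcal{B}_{ff}$ is nonsingular along the trajectory and $\boldsymbol{g}_{ij}(t)\to\boldsymbol{g}_{ij}^d$, with the leaders at (or converging to) $\boldsymbol{p}_l^d$. Then $\boldsymbol{p}_f(t)=-\mathcal{B}_{ff}(t)^{-1}\mathcal{B}_{fl}(t)\boldsymbol{p}_l(t)$. The entries of $\mathcal{B}(t)$ are continuous in the bearings, so $\mathcal{B}_{ff}(t)\to\mathcal{B}^d_{ff}$ and $\mathcal{B}_{fl}(t)\to\mathcal{B}^d_{fl}$. Provided the limit $\mathcal{B}^d_{ff}$ is nonsingular, matrix inversion is continuous there, and hence $\boldsymbol{p}_f(t)\to-(\mathcal{B}^d_{ff})^{-1}\mathcal{B}^d_{fl}\boldsymbol{p}^d_l=\boldsymbol{p}^d_f$.

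For necessity, suppose $\mathcal{B}_{ff}$ is singular, and take $\boldsymbol{0}\neq\boldsymbol{\delta}\in\ker\mathcal{B}_{ff}$. Because $\mathcal{B}$ is a sum of PSD blocks $(e_i-e_j)(e_i-e_j)^\top\otimes P_{\boldsymbol{g}_{ij}}$, we have
\begin{align*}
\boldsymbol{\delta}^\top\mathcal{B}_{ff}\boldsymbol{\delta}=\sum_{(i,j)}\|P_{\boldsymbol{g}_{ij}}(\boldsymbol{\delta}_j-\boldsymbol{\delta}_i)\|^2=0,
\end{align*}
where $\boldsymbol{\delta}_i=0$ for leaders. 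So $\boldsymbol{\delta}$ is a follower motion that is parallel to every edge bearing. The perturbed configuration $\boldsymbol{p}_f^d+\epsilon\boldsymbol{\delta}$, for small $\epsilon$ of the appropriate sign, keeps all bearings (no edge collapses or flips), keeps the leaders fixed, and is distinct from $\boldsymbol{p}^d$. A trajectory converging to it achieves $\boldsymbol{g}\to\boldsymbol{g}^d$ but not $\boldsymbol{p}\to\boldsymbol{p}^d$, so uniqueness fails.

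For the claim $n_l\ge2$, argue by contraposition. If $n_l\le1$, the vector $\boldsymbol{1}\otimes\boldsymbol{v}$ restricted to the followers (a common translation) lies in $\ker\mathcal{B}_{ff}$ when $n_l=0$. When $n_l=1$, use a scaling about the single leader: with $\boldsymbol{\delta}_i=\boldsymbol{p}_i-\boldsymbol{p}_l$ each $\boldsymbol{\delta}_j-\boldsymbol{\delta}_i=\boldsymbol{p}_{ij}$ is annihilated by $P_{\boldsymbol{g}_{ij}}$, while $\boldsymbol{\delta}_l=0$. In either case $\mathcal{B}_{ff}$ is singular.

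The main obstacle is the time-varying graph. Both the edge set and the bearing convergence are indexed by $\mathcal{E}(t)$, so continuity of the inverse requires that $\mathcal{B}_{ff}(t)$ stay uniformly nonsingular, e.g.\ that the edge set eventually stabilizes with the limit $\mathcal{B}^d_{ff}$ invertible. I would first try to impose this via Definition~\ref{def_connectivity} together with eventual constancy of $\mathcal{E}(t)$ near the desired formation, and otherwise state the limit relation on the eventually fixed graph as in \cite{zhao2016localizability}.
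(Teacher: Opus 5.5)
The paper gives no proof of this lemma; it imports it from \cite{zhao2016localizability}. Your argument is essentially that reference's argument: $\mathcal{B}\boldsymbol{p}=\boldsymbol{0}$, unique solvability of $\mathcal{B}_{ff}\boldsymbol{p}_f=-\mathcal{B}_{fl}\boldsymbol{p}_l$, a bearing-preserving kernel perturbation for necessity, and translation/scaling null vectors for $n_l\ge 2$. It is correct when $\mathcal{B}_{ff}$ is read as $\mathcal{B}^d_{ff}$, the Laplacian at the desired bearings on a fixed edge set. Your small-$\epsilon$ step correctly checks that no edge flips. No sign choice is needed: each edge vector becomes $(\|\boldsymbol{p}^d_{ij}\|+\epsilon c_{ij})\boldsymbol{g}^d_{ij}$, so any sufficiently small $|\epsilon|$ works. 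The $n_l\ge 2$ part holds at any configuration.

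The step that would fail is your first plan for the time-varying version. You cannot get invertibility of the limit from Definition~\ref{def_connectivity} plus eventual constancy of $\mathcal{E}(t)$, because nonsingularity at every finite $t$ does not pass to the limit. In fact the literal ``if'' direction is false. Take leaders at $(-1,0)$ and $(1,0)$, one follower with $\boldsymbol{p}^d_3=\boldsymbol{0}$, $\lambda=3$ (so the graph is complete and constant), and $\boldsymbol{p}_3(t)=(\tfrac12,e^{-t})$. Then:
\begin{itemize}
\item Assumption~\ref{A1} holds, and every bearing converges to its desired value;
\item $\det\mathcal{B}_{ff}(t)\neq 0$ for all $t$, since $\boldsymbol{g}_{31}(t)\not\parallel\boldsymbol{g}_{32}(t)$;
\item yet $\boldsymbol{p}_3(t)\to(\tfrac12,0)\neq\boldsymbol{p}^d_3$, and $\mathcal{B}^d_{ff}$ is singular.
\end{itemize}
So your fallback is forced, not optional. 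The hypothesis must be $\det\mathcal{B}^d_{ff}\neq 0$. Once you assume it, nonsingularity of $\mathcal{B}_{ff}(t)$ for large $t$ is automatic (invertible matrices form an open set), and its value at earlier times plays no role.

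The same issue affects necessity. Your argument perturbs $\boldsymbol{p}^d_f$ along $\boldsymbol{\delta}\in\ker\mathcal{B}^d_{ff}$, so it proves ``uniqueness $\Rightarrow \det\mathcal{B}^d_{ff}\neq 0$''. It does not prove ``$\det\mathcal{B}_{ff}(t)\neq 0$ for all $t$'', and that cannot follow anyway. A trajectory converging to a localizable target may pass through a collinear configuration at some finite time, making $\mathcal{B}_{ff}(t)$ singular there without affecting uniqueness.
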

\cl{Note that we need two leaders since one alone cannot uniquely resolve agent's position from bearing.}
\begin{definition} [Rigid formation]\label{mform}
   Consider a MAS connected as $\mathcal{G}(t)$ under a leader-follower framework, with preserved rigidity as in Definition \ref{def_connectivity}, with time-varying leader trajectories $\boldsymbol{p}_l^{d}(t)$ moving with constant velocity $\boldsymbol{v}_{l}$ and the desired bearing configuration $\boldsymbol{g}^{d}$. With Assumption \ref{A1}, the agents are said to achieve a unique \textit{rigid formation} $\mathcal{F}(\boldsymbol{g}(t))$ if $
    \lim_{t \rightarrow \infty} \| \boldsymbol{g}(t)- \boldsymbol{g}^{d}\| = 0$.
\end{definition}
\begin{problem}\label{mainprob}
    Given a heterogeneous MAS with agent dynamics as in \eqref{sys}, a communication graph $\mathcal{G}(t)$ with leader-follower framework under Assumption \ref{A1} and limited sensing range $\lambda$, a desired unique rigid formation $\mathcal{F}(\boldsymbol{g}(t))$ specified by the leader's trajectory $\boldsymbol{p}_{l}^{d}(t)$ and desired bearing vectors $\boldsymbol{g}_{ij}^{d}$, design a distributed control strategy using bearing information $\boldsymbol{g}_{ij}$, so that it achieves a unique rigid formation $\mathcal{F}(\boldsymbol{g}(t))$ (as in Definition \ref{mform}) under bearing rigidity maintenance (as in Definition \ref{def_connectivity}).
\end{problem}
We solve Problem \ref{mainprob} via a time-varying formation tracking controller using the reachability CBF constraint in a QP framework, as described in \cite[Section 2.4]{saharsh2026cbfformation}.
\subsection{Reachability Control Barrier Function}
We introduce a CBF-like function that ensures forward invariance of a safe set $X \subset \mathbb{R}^{n}$ as well as reachability of a target set $R \subset X$, in finite time $\tau \in \mathbb{R}^{+}$. We employ a continuously differentiable function $b: \mathbb{R}^{n} \rightarrow \mathbb{R}$ in order to define the safe set $X$, where $X:= \{{\boldsymbol{x}} \in \mathbb{R}^{n}: b(\boldsymbol{x}) \geq 0\},$
and the boundary set $\partial X := \{{\boldsymbol{x}} \in \mathbb{R}^{n}: b(\boldsymbol{x}) = 0 \}$. Using the function $b$, which has unique maxima in $X$, $\max_{\boldsymbol{x} \in X} b(\boldsymbol{x}) = M$, with constant $\mu \leq M, \mu\in \mathbb{R}^{+}_{0}$, the target set $R$ is defined as $R:= \{{\boldsymbol{x}} \in \mathbb{R}^{n}: b(\boldsymbol{x}) \geq \mu\}.$
Now we define our CBF-like function, using $b$ as below.
\begin{definition}[Reachability CBF]\label{cbfdef}
    Given a compact set $X \subset \mathbb{R}^n$, target set $R$, using a continuously differentiable function $b$, with unique maxima in $X$, associated with the affine control system $\dot{\boldsymbol{x}} = f({\boldsymbol{x}}) + h({\boldsymbol{x}})\boldsymbol{u}$, where $f: \mathbb{R}^{n} \rightarrow \mathbb{R}^{n}$, $h: \mathbb{R}^{n} \rightarrow \mathbb{R}^{n \times m}$,  $\boldsymbol{u} \in \mathbb{R}^{m}$, then $b$ is a \emph{reachability control barrier function} (reach CBF) if for reachability time $ \tau \in \mathbb{R}^{+}$, it holds that $\forall \boldsymbol{x} \in X$, 
    \begin{align}
        \label{rcbfcons}
        &\sup_{{\boldsymbol{u}} \in \mathbb{R}^m}\{\underbrace{\nabla b(\boldsymbol{x}) \cdot f(\boldsymbol{x})}_{K_f(\boldsymbol{x})} + \underbrace{\nabla b(\boldsymbol{x}) \cdot h(\boldsymbol{x})}_{K_h^{\top}(\boldsymbol{x})}{\boldsymbol{u}}\} \geq \frac{\mu}{\tau},
    \end{align}
where $K_f : \mathbb{R}^n \rightarrow \mathbb{R}$ and $K_h : \mathbb{R}^n \rightarrow \mathbb{R}^{m}$ are locally Lipschitz continuous functions. 
\end{definition}
Next, the CBF-QP-like optimization problem is stated below.
\begin{theorem}\label{thm1}
Given a compact set $X$, a target set $R$, using the reach CBF $b$, as in Definition \ref{cbfdef}, if $\frac{\partial b}{\partial {\boldsymbol{x}}} \neq \boldsymbol{0}, \forall {\boldsymbol{x}} \in \partial X$, for initial state $\boldsymbol{x}(0) \in X$, and the control input ${\boldsymbol{u}}$ is computed based on the optimization problem as
\begin{align}\label{QPdef}
    &{\boldsymbol{u}}({\boldsymbol{x}}) = \arg \min_{\boldsymbol{q} \in \mathbb{R}^{m}} \quad \frac{1}{2}\boldsymbol{q}^{\top}\boldsymbol{q} , \nonumber \\
    &s.t. \quad K_{f}(\boldsymbol{x}) + K_{h}^{\top}(\boldsymbol{x})\boldsymbol{q} \geq \frac{\mu}{\tau}, 
\end{align}
where $\tau \in \mathbb{R}^{+}, 0 \leq \mu \leq M$, then the set $R$ is said to be reachable in finite time $\tau$, inside the forward invariant set $X$.    
\end{theorem}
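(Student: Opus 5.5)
The plan has three parts: show the QP \eqref{QPdef} is well posed and its solution is locally Lipschitz, then show the constraint forces $\dot b \geq \mu/\tau$ along trajectories, and finally read off forward invariance of $X$ and reachability of $R$ by a comparison argument.

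\emph{Feasibility and regularity.} By Definition \ref{cbfdef}, for every $\boldsymbol{x}\in X$ there is some $\boldsymbol{q}$ with $K_f(\boldsymbol{x})+K_h^\top(\boldsymbol{x})\boldsymbol{q}\geq \mu/\tau$, so the single affine constraint in \eqref{QPdef} is feasible. Because the cost is strictly convex, the minimizer is unique. It is the projection of the origin onto a half-space, which gives the closed form
\begin{align*}
\boldsymbol{u}(\boldsymbol{x}) = \max\Big\{0,\ \tfrac{\mu}{\tau}-K_f(\boldsymbol{x})\Big\}\,\frac{K_h(\boldsymbol{x})}{\|K_h(\boldsymbol{x})\|^2}
\end{align*}
whenever $K_h(\boldsymbol{x})\neq \boldsymbol{0}$, and $\boldsymbol{u}=\boldsymbol{0}$ otherwise. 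Where $K_h=\boldsymbol{0}$, feasibility forces $K_f\geq\mu/\tau$. On any region where $\|K_h\|$ is bounded away from zero, local Lipschitzness of $K_f,K_h$ makes $\boldsymbol{u}(\boldsymbol{x})$ locally Lipschitz, so the closed loop has unique solutions.

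\emph{Main obstacle.} The delicate point is points where $K_h$ vanishes. There I would invoke the standard CBF-QP Lipschitz argument (Ames et al.; Xu et al.): the constraint is a single inequality, so it is either inactive, giving $\boldsymbol{u}=\boldsymbol{0}$ on a neighbourhood, or active with $K_h\neq\boldsymbol{0}$. The hypothesis $\partial b/\partial\boldsymbol{x}\neq\boldsymbol{0}$ on $\partial X$ is what excludes degeneracy at the boundary where invariance matters. If needed, I would state this as a regularity assumption on the interior.

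\emph{Invariance.} Along closed-loop trajectories, $\dot b(\boldsymbol{x}(t)) = K_f+K_h^\top\boldsymbol{u}\geq \mu/\tau\geq 0$ for as long as $\boldsymbol{x}(t)\in X$. Thus $b$ is nondecreasing. If the trajectory tried to leave $X$, it would need a time at which $b=0$ with $\dot b<0$. That contradicts the inequality, and the nonvanishing gradient on $\partial X$ ensures Nagumo's condition applies. Hence $X$ is forward invariant. Compactness of $X$ together with continuity of the vector field also rules out finite escape, so solutions exist for all $t\ge0$.

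\emph{Reachability.} Integrating the inequality gives $b(\boldsymbol{x}(t))\geq b(\boldsymbol{x}(0))+\mu t/\tau\geq \mu t/\tau$, using $b(\boldsymbol{x}(0))\ge0$. At $t=\tau$ this yields $b\geq\mu$, i.e.\ $\boldsymbol{x}(\tau)\in R$. For $t\geq\tau$ the same monotonicity keeps $b\geq\mu$, so $R$ is also forward invariant.

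This is consistent with $M$ being the maximum of $b$ on $X$: the inequality $\dot b\ge\mu/\tau$ can only hold up to the time $b$ reaches $M$. That time satisfies $\mu t/\tau\le M$, which is compatible with reaching $\mu\le M$ by $t=\tau$. Beyond that time, on the maximizer the condition of Definition \ref{cbfdef} must hold with $\nabla b=\boldsymbol{0}$, which implicitly requires $\mu/\tau\le 0$ there. I would note this as a caveat: for $\mu>0$ the result should be read as guaranteeing entry into $R$ by time $\tau$, with feasibility on $R$ assumed via Definition \ref{cbfdef}.
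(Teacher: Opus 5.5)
Your proposal is correct and follows the paper's own argument. Integrating the QP constraint gives $b(\boldsymbol{x}(\tau))\ge b(\boldsymbol{x}(0))+\mu\ge\mu$, and $\dot b\ge\mu/\tau\ge 0$ with Nagumo's condition (the paper cites Ames et al., Theorem 1) gives forward invariance of $X$; your feasibility, closed-form and no-escape remarks supply well-posedness details the paper takes for granted.

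Two refinements. First, $\partial b/\partial\boldsymbol{x}\neq\boldsymbol{0}$ on $\partial X$ does not exclude $K_h=\boldsymbol{0}$: at points with $K_h=\boldsymbol{0}$ and $K_f=\mu/\tau$ the QP solution can be discontinuous, so your ``inactive or active with $K_h\neq\boldsymbol{0}$'' dichotomy misses this case, and the extra regularity assumption you mention is genuinely needed. Second, your closing caveat is sharper than you state it: since $\nabla b=\boldsymbol{0}$ at the interior maximizer, Definition~\ref{cbfdef} forces $\mu=0$, i.e.\ $R=X$.
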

\begin{proof}
    By using the initial condition $b(\boldsymbol{x}(0)) \geq 0$, and after integrating the constraint in \eqref{QPdef}, we get $b(\boldsymbol{x}(\tau)) - b(\boldsymbol{x}(0)) \geq \int_{0}^{\tau}\frac{\mu}{\tau}ds  = \mu \ge 0.$
Thus, the set $R$ is reached in time $\tau$. The compact set $X$ is forward invariant using (\cite{ames2019control}[Theorem 1]) \cl{since the reach-CBF constraint keeps $b(x)$ non-decreasing}, as $b(\boldsymbol{x}(0)) \geq 0, \dot{b} \geq 0, \forall t > 0$.
\end{proof}
Further, our main Problem \ref{mainprob} is decomposed into subproblems for each follower agent $i$ as discussed in the following section.

\section{Subproblem Formulation}
\label{sec: distributed execution}
To solve Problem~\ref{mainprob} in a distributed manner, we decompose the global formation task $\mathcal{F}(\boldsymbol{g}(t))$ and rigidity maintenance task into local subtasks, defined over each agent's neighborhood $\mathcal{N}_i(t)$ formed on the sensing ball $\mathbb{B}(\boldsymbol{p}_i, \lambda)$. To address Problem~\ref{mainprob} using only local information, we assume:
\begin{assum}
    The relative position $\boldsymbol{p}_{ij}(t)$, velocity $\boldsymbol{v}_{ij}(t)$ of $j^{th}$ agent with respect to $i^{th}$ agent, along with the control input $\boldsymbol{u}_j(t)$ taken by agent $j$, can be measured at time $t$, If they are connected, i.e., if $j \in \mathcal{N}_i(t)$.
    \label{A2}
\end{assum}
\begin{figure}[thpb]
      \begin{center}
      \includegraphics[scale=0.2]{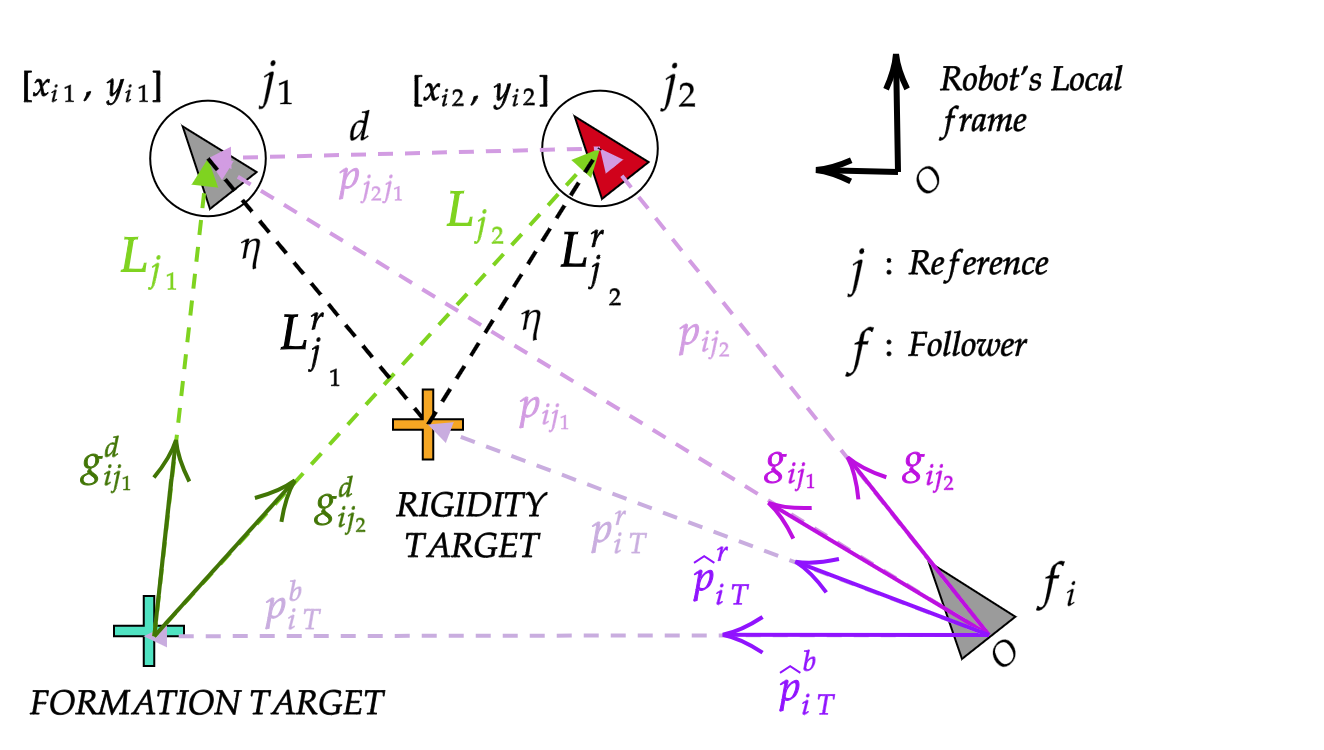}
      \caption{Formation control problem for follower agent $i$.}
      \label{probset}
      \end{center}
\vspace{-5mm}
\end{figure}
\begin{lemma}[Leader-Rooted Reference Assignment]
\label{lem2}
\cl{Given the MAS connected as $\mathcal{G}(0)$ under Assumption~\ref{A1},
there exists a level assignment $\ell:\mathcal{V}\to\mathbb{N}_0$, with
$\ell(l)=0,\ \forall l\in\mathcal{V}_L$, and, for every follower
$i\in\mathcal{V}_F$, there exists at least two neighbors $j_1,j_2\in\mathcal{N}_i(0)$
with $\ell(j_1),\ell(j_2) < \ell(i)$ and non-parallel desired bearings
$\boldsymbol{g}^d_{ij_1}\not\parallel\boldsymbol{g}^d_{ij_2}$.}
\end{lemma}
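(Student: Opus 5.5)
The plan is a greedy, breadth-first construction of $\ell$ on $\mathcal{G}(0)$, followed by an argument that the construction cannot stall before every follower is labeled. First, set $\ell(l)=0$ for all $l\in\mathcal{V}_L$ and let $A_0:=\mathcal{V}_L$ be the labeled set. Given $A_k$, define
\begin{align*}
C_k:=\{ i\in\mathcal{V}_F\setminus A_k : \exists\, j_1,j_2\in\mathcal{N}_i(0)\cap A_k \text{ with } \boldsymbol{g}^d_{ij_1}\not\parallel\boldsymbol{g}^d_{ij_2}\}.
\end{align*}
Put $\ell(i):=k+1$ for $i\in C_k$ and $A_{k+1}:=A_k\cup C_k$. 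By construction every labeled follower has two strictly lower-level neighbors with non-parallel desired bearings. Since $\mathcal{V}$ is finite, the sets $A_k$ either reach $\mathcal{V}$ or stabilize at some $A_K\subsetneq\mathcal{V}$ with $C_K=\emptyset$. Assumption~\ref{A1} gives $n_l\ge 2$ via Lemma~\ref{lem1}, and the leader subgraph is bearing rigid, so the base level is nondegenerate.

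The core step is to rule out a stall, arguing by contradiction. Suppose $U:=\mathcal{V}\setminus A_K\neq\emptyset$. Then each $i\in U$ sees the labeled set $A_K$ along at most one desired direction, say $\boldsymbol{d}_i\in\mathcal{S}$, or along none. I would build a nonzero vector $\boldsymbol{x}\in\mathbb{R}^{2n_f}$ with $\boldsymbol{x}_j=\boldsymbol{0}$ for $j\in A_K\cap\mathcal{V}_F$ and $\boldsymbol{x}_i\in\operatorname{span}\{\boldsymbol{d}_i\}$ for $i\in U$. I would then check that $\boldsymbol{x}^{\top}\mathcal{B}_{ff}(0)\boldsymbol{x}=\sum_{(i,j)\in\mathcal{E}(0)}\|P_{\boldsymbol{g}_{ij}}(\boldsymbol{x}_j-\boldsymbol{x}_i)\|^2$ vanishes, with $\boldsymbol{x}_j=\boldsymbol{0}$ for leaders. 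Edges from $U$ to $A_K$ contribute zero because $\boldsymbol{x}_i\parallel\boldsymbol{g}_{ij}$ for those edges. A vanishing quadratic form would contradict $\det(\mathcal{B}_{ff}(0))\neq 0$ (equivalently, the localizability criterion of Lemma~\ref{lem1}). This would force $U=\emptyset$ and complete the proof.

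The main obstacle is the edges internal to $U$, because these impose the coupled conditions $\boldsymbol{x}_j-\boldsymbol{x}_i\parallel\boldsymbol{g}^d_{ij}$. Choosing each $\boldsymbol{x}_i$ along its own $\boldsymbol{d}_i$ need not satisfy them. I would first handle the case in which $U$ induces a forest: there I can propagate the free scalars along the tree and solve the constraints one edge at a time. For general $U$ I would count degrees of freedom against constraints, with roughly $|U|$ scalar unknowns versus one scalar constraint per internal edge. I suspect that infinitesimal bearing rigidity alone does not always yield such an ordering; a follower cycle in which each follower has a single leader link can be localizable. If the contradiction cannot be closed in general, I would make explicit the additional hypothesis the construction needs. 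That hypothesis is that $\mathcal{G}(0)$ admits a bearing-Henneberg-type (leader-rooted, two-parent) ordering, which is what the controller design downstream actually requires.
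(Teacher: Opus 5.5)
\textbf{There is a genuine gap, and it cannot be filled: the lemma is false under Assumption~\ref{A1}, as you suspected.} Your proposal leaves open the one step everything rests on, ruling out a stall. Here is a counterexample. Take $\lambda=9$, leaders $\boldsymbol{p}_1=(0,0)$, $\boldsymbol{p}_2=(8,0)$, followers $\boldsymbol{p}_3=(0,6)$, $\boldsymbol{p}_4=(8,6)$, $\boldsymbol{p}_5=(2,7)$, and $\boldsymbol{p}^d=\boldsymbol{p}(0)$.

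The sensing graph is $\mathcal{E}(0)=\{(1,2),(1,3),(1,5),(2,4),(3,4),(3,5),(4,5)\}$, because the pairs $(1,4),(2,3),(2,5)$ have lengths $10,10,\sqrt{85}>9$. A null vector of $\mathcal{B}_{ff}(0)$ must vanish edge-by-edge in the quadratic form. The leader edges force $\boldsymbol{x}_3=a(0,1)$, $\boldsymbol{x}_4=b(0,1)$, $\boldsymbol{x}_5=c(2,7)$. The three follower edges then force $a=b$, $a=6c$, $3b=22c$, hence $a=b=c=0$.

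So $\det(\mathcal{B}_{ff}(0))\neq 0$ and the leader edge is bearing rigid. The whole graph is even infinitesimally bearing rigid: it has $7=2n-3$ edges, and the kernel of $\mathcal{B}(0)$ is only translations and scaling. Yet every follower has exactly one leader neighbor. Whichever follower has the smallest level therefore has at most one strictly lower neighbor, so no admissible $\ell$ exists, whatever the bearings.

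Your procedure stalls at $A_0$ with $U=\{3,4,5\}$ inducing a triangle. That gives three scalar unknowns against three independent edge constraints, exactly the cycle you anticipated. Your greedy construction is also complete: if some admissible $\ell$ exists, induction on $\ell$ shows each $i$ lies in $A_{\ell(i)}$. Hence the lemma is equivalent to your no-stall claim, and no argument can rescue it as stated.

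The paper's proof gets its conclusion only by asserting that the minimally bearing-rigid $\hat{\mathcal{G}}(0)$ admits a Henneberg sequence rooted at $\mathcal{G}_{\mathcal{V}_L}(0)$ using vertex additions alone. Henneberg's characterization of minimally rigid planar graphs also needs edge splits. In the example $\hat{\mathcal{G}}(0)=\mathcal{G}(0)$, and no follower has degree two, so no follower can be the last vertex placed by a vertex addition. The graph actually arises as follows: add $4$ to $\{1,2\}$, add $3$ to $\{1,4\}$, then split the edge $(1,4)$ with $5$. That split is exactly what removes follower $4$'s second parent.

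Both the paper's non-parallelism step and your null-vector argument also conflate $\boldsymbol{g}^d_{ij}$ with $\boldsymbol{g}_{ij}(0)$. Your edge terms vanish only if $\boldsymbol{d}_i\parallel\boldsymbol{g}_{ij}(0)$. Consider a single follower adjacent to both leaders, non-collinear with them at $t=0$ but collinear with them in $\boldsymbol{p}^d$. It satisfies Assumption~\ref{A1} and already violates the conclusion.

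Your proposed repair is the right one: promote the leader-rooted two-parent ordering, with non-parallel bearings, to a hypothesis. That hypothesis implies $\det(\mathcal{B}_{ff})\neq 0$ by induction on levels, but it is strictly stronger, as the example shows.
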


\begin{proof}
\cl{
Since $\det(B_{ff}(0))\neq 0$ by Assumption~\ref{A1}, $\mathcal{G}(0)$ is infinitesimally bearing rigid, hence also distance rigid \cite[Theorem~8]{zhao2015bearing}, and $\mathcal{G}_{\mathcal{V}_L}(0)$ is itself rigid. Starting from the rigid seed $\mathcal{G}_{\mathcal{V}_L}(0)$, removing redundant edges from $\mathcal{G}(0)$ one at a time while preserving bearing rigidity yields a minimally bearing-rigid subgraph $\hat{\mathcal{G}}(0)\subseteq\mathcal{G}(0)$. In 2D, $\hat{\mathcal{G}}(0)$ admits a Henneberg-type vertex-addition sequence \cite[Section~2.4]{anderson2003operations} rooted at $\mathcal{G}_{\mathcal{V}_L}(0)$: each follower $i$ is added via exactly two edges to already-placed vertices $j_1,j_2$. This gives only the existence of such $j_1,j_2$, with non-parallel bearings. Assigning $\ell(l)=0$ for leaders and $\ell(i)=1+\max\{\ell(j_1),\ell(j_2)\}$ along the sequence gives $\ell(j_1),\ell(j_2)<\ell(i)$ by construction.
Suppose, for contradiction, $\boldsymbol{g}^d_{ij_1}\parallel\boldsymbol{g}^d_{ij_2}$ for any pair $j_1,j_2$. Then $\boldsymbol{p}_i^d$ cannot be uniquely determined from $j_1,j_2$ by the ``only if'' part of Lemma~\ref{lem1}, implying $\det(B_{ff}(0))=0$, a contradiction. 
Hence $\ell$ satisfies all required properties.
}
\end{proof}
\begin{remark}[Selection Rule]
\cl{For each follower $i \in \mathcal{V}_F$, the reference pair $j_1, j_2$ in Lemma \ref{lem2} is selected as the two lowest-level, non-parallel bearing neighbors, i.e.,
$\{j_1, j_2\} = \operatorname*{arg\,min}_{\{j,k\} \subseteq \mathcal{N}_i(0),\ \boldsymbol{g}^d_{ij} \not\parallel \boldsymbol{g}^d_{ik}} \big(\ell(j), \ell(k)\big),$
under the construction order $\ell$ which is obtained by the sequential Henneberg construction and computed offline once $\mathcal{G}(0)$ is given. }
\end{remark}

\cl{We define the time-invariant reference subgraph $\hat{\mathcal{G}} := (\mathcal{V}, \hat{\mathcal{E}}(0))$, $\hat{\mathcal{E}}(0) = \{(i, j_1),(i, j_2): i \in \mathcal{V}_F\} \cup \mathcal{E}_L(0)$, to contain only
the two fixed reference edges assigned to each follower
(Lemma~\ref{lem2}), with follower-follower bearing
Laplacian block $\hat{\mathcal{B}}_{ff}(t)$. Lemma~\ref{lem2}
establishes that this subgraph is rigid at $t = 0$, via Henneberg vertex-addition sequence\cite[Theorem 7]{anderson2003operations} on the rigid subgraph of only leaders $\mathcal{G}_{\mathcal{V}_L}(0)$. Moreover, since $\hat{\mathcal{E}}(t)$
contains exactly two edges per follower with no redundancy,
losing either reference edge strictly breaks this rigidity, by the
``only if'' part of Lemma~\ref{lem1}, giving criticality, unlike
the fully sensed graph $\mathcal{G}(t)$. Building on this, if every follower maintains its assigned reference pair for
all time, $\hat{\mathcal{B}}_{ff}(t)$ remains nonsingular for all $t$;
and since $\hat{\mathcal{E}}(t) \subseteq \mathcal{E}(t)$ with
bearing Laplacian matrix rank non-decreasing under edge addition, this in turn
forces $\det(\mathcal{B}_{ff}(t)) \neq 0$. Thus, maintaining only the
sparse, locally-assigned reference edges is sufficient to maintain
rigidity of the full sensed graph $\mathcal{G}(t)$.}
\begin{subproblem}
Consider a formation control problem for a follower agent $i \in \mathcal{V}_{F}$ with respect to its own local frame of reference. Given the relative position of the reference nodes $j_1$ and $j_2$ as $\boldsymbol{p}_{ij_1} = [x_{i1} ,\ y_{i1}]^{\top}$ and $\boldsymbol{p}_{ij_2} = [x_{i2},\ y_{i2}]^{\top}$, respectively. By leveraging Assumption \ref{A2}, we need to achieve \textit{formation tracking}, i.e., the agent $i$ aims to satisfy the formation subtask $\mathcal{F}_i(\boldsymbol{g}(t)) :=   \lim_{t \rightarrow \infty}\|\boldsymbol{g}_{ij}(t) - \boldsymbol{g}_{ij}^{d}\| = 0, \forall j \in \{j_1,j_2\},$ and maintain \textit{local bearing rigidity constraint} with references as $\boldsymbol{p}_{j_1}(t), \boldsymbol{p}_{j_2}(t) \in \mathbb{B}(\boldsymbol{p}_i(t), \lambda)$ or $\max\{\|\boldsymbol{p}_{ij_1}(t)\|,\ \|\boldsymbol{p}_{ij_2}(t)\|\} \leq \lambda, \forall t \in \mathbb{R}_{0}^{+}$, \cl{together with non-collinearity for reference links as $\sin^{-1} (\boldsymbol{g}_{ij_1}(t)) \times \boldsymbol{g}_{ij_2}(t)) \neq 0, , \forall t \in \mathbb{R}_{0}^{+}$}.
\label{sp}
\end{subproblem} 
Next, we show that solving the Subproblem \ref{sp} for each follower agent $i\in\mathcal{V}_F$ also solves the Problem \ref{mainprob}.
\begin{theorem}\label{theo1}
Given a heterogeneous MAS, connected as $\mathcal{G}(t)$ with a leader-follower framework, satisfying Assumptions \ref{A1} and \ref{A2} with sensing range $\lambda$, and if each follower agent $i$ achieves the formation subtask $\mathcal{F}_i(\boldsymbol{g}(t))$ such that for reference nodes $j_1, j_2$, we have $\boldsymbol{p}_{j_1}(t), \boldsymbol{p}_{j_2}(t) \in \mathbb{B}(\boldsymbol{p}_i(t), \lambda), \forall t \in \mathbb{R}_{0}^{+}$, then the MAS solves Problem \ref{mainprob}.
\end{theorem}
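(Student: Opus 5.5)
The plan is to split Problem~\ref{mainprob} into its two requirements, rigidity maintenance (Definition~\ref{def_connectivity}) and achievement of the unique rigid formation (Definition~\ref{mform}), and to derive each from the hypotheses on the sparse reference subgraph $\hat{\mathcal{G}}$ built after Lemma~\ref{lem2}. Throughout, I take $j_1,j_2$ to be the reference pair fixed for each follower $i$ by Lemma~\ref{lem2} and the Selection Rule.

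\emph{Rigidity maintenance.} By hypothesis, $\boldsymbol{p}_{j_1}(t),\boldsymbol{p}_{j_2}(t)\in\mathbb{B}(\boldsymbol{p}_i(t),\lambda)$ for all $t$. The sensing-range edge condition then gives $(i,j_1),(i,j_2)\in\mathcal{E}(t)$ for every $t\ge 0$ and every follower $i$. The leader edges $\mathcal{E}_L(0)$ must also persist. Since leaders track prescribed trajectories with a common constant velocity $\boldsymbol{v}_l$, their relative positions stay at their initial values, so these edges remain within range; I would state this explicitly as a consequence of the leader motion in Definition~\ref{mform}. Together these give $\hat{\mathcal{E}}(0)\subseteq\mathcal{E}(t)$ for all $t$.

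Next I would argue that $\hat{\mathcal{B}}_{ff}(t)$ stays nonsingular. Lemma~\ref{lem2} gives a Henneberg-type construction in which each follower is attached by two edges to lower-level vertices. By induction on the level $\ell$, each follower's position is uniquely determined by the bearings to $j_1,j_2$ as long as these bearings are not parallel. This is the main obstacle. The range hypothesis alone does not exclude collinearity, so I would invoke the non-collinearity part of Subproblem~\ref{sp}, which I treat as part of what "achieves the formation subtask" means. As a fallback, I would use that the bearings converge to $\boldsymbol{g}^d_{ij_1}\not\parallel\boldsymbol{g}^d_{ij_2}$, together with continuity, for the asymptotic statement. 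With non-collinearity in hand, $\ker\hat{\mathcal{B}}_{ff}(t)=\{0\}$ follows inductively: a follower velocity in the kernel must be orthogonal to two non-parallel directions relative to already-fixed neighbors, hence zero. Finally, I would use that the bearing Laplacian is a sum of positive semidefinite edge terms, so $\mathcal{B}_{ff}(t)\succeq\hat{\mathcal{B}}_{ff}(t)\succ 0$. This gives $\det(\mathcal{B}_{ff}(t))\neq 0$ for all $t$, which is Definition~\ref{def_connectivity}.

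\emph{Unique rigid formation.} Each follower achieves $\mathcal{F}_i$, so $\boldsymbol{g}_{ij}(t)\to\boldsymbol{g}^d_{ij}$ for $j\in\{j_1,j_2\}$. I would show by induction on $\ell$ that $\boldsymbol{p}_i(t)-\boldsymbol{p}_i^d(t)\to 0$. Leaders ($\ell=0$) track $\boldsymbol{p}_l^d(t)$. For a follower whose reference nodes already converge, $\boldsymbol{p}_i$ lies at the intersection of two lines through $\boldsymbol{p}_{j_1},\boldsymbol{p}_{j_2}$ with limiting non-parallel directions $\boldsymbol{g}^d_{ij_1},\boldsymbol{g}^d_{ij_2}$. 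This intersection depends continuously on its data and converges to $\boldsymbol{p}_i^d$. Equivalently, I would invoke Lemma~\ref{lem1} applied to $\hat{\mathcal{G}}$, whose $\hat{\mathcal{B}}_{ff}$ was just shown to be nonsingular.

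Once all positions converge to $\boldsymbol{p}^d(t)$, every edge of $\mathcal{E}(t)$, including non-reference ones, satisfies $\boldsymbol{g}_{ij}(t)\to\boldsymbol{g}^d_{ij}$, since bearings are continuous away from coincident positions and the desired positions are distinct. Hence $\|\boldsymbol{g}(t)-\boldsymbol{g}^d\|\to 0$, which is Definition~\ref{mform}, and combined with the rigidity maintenance step this solves Problem~\ref{mainprob}.
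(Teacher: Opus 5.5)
Your proposal is correct and follows the paper's route. The reference edges persist by the range hypothesis. Non-collinearity of the reference links, which the paper's proof also imports from Subproblem~\ref{sp} rather than from the theorem's stated hypotheses, makes $\hat{\mathcal{B}}_{ff}(t)$ nonsingular. Edge addition then forces $\det(\mathcal{B}_{ff}(t))\neq 0$, and Lemma~\ref{lem1}, or equivalently your level-by-level induction, gives convergence of all bearings.

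Your version is simply more explicit, through the inductive kernel argument and the ordering $\mathcal{B}_{ff}\succeq\hat{\mathcal{B}}_{ff}$. Your step showing that leader--leader edges persist is harmless but unnecessary, since those edges appear only in $\mathcal{B}_{ll}$ and never enter $\mathcal{B}_{ff}$.
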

\begin{proof}
   Given a unique rigid formation $\mathcal{F}(\boldsymbol{g}(t))$, described by $ \boldsymbol{g}^{d}$, formation subtask $\mathcal{F}_i(\boldsymbol{g}(t))$ implies a unique local formation for each follower agent $i$, with respect to their reference nodes, as the local bearings $\boldsymbol{g}_{ij}(t) \rightarrow \boldsymbol{g}_{ij}^{d}, \forall j \in \{j_1,j_2\}$. Therefore, with the leaders moving as predefined and each follower satisfying $\mathcal{F}_i(\boldsymbol{g}(t))$, the collective satisfaction of all formation subtasks, i.e., $\{\mathcal{F}_i(\boldsymbol{g}(t))\}_{{\forall i\in \mathcal{V}_{F}}}$, implies that $\boldsymbol{g}_{ij}(t) \rightarrow \boldsymbol{g}_{ij}^{d}, \forall (i,j) \in \mathcal{E}$, due to unique rigid formation with reference nodes using the Lemma \ref{lem1}, and the desired unique formation $\mathcal{F}(\boldsymbol{g}(t))$ is achieved. Bearing rigidity (as in Definition \ref{def_connectivity}) is preserved using Lemma \ref{lem2} by maintaining \cl{non-collinear reference} links $(i, j_1), (i, j_2)$. 
\end{proof}
By Fig. \ref{probset} we can maintain the \cl{non-collinear} reference links  using rigidity-critical target $\boldsymbol{p}_{iT}^r$ (\cl{the unique intersection of the two non-parallel lines $L_{j_1}^r$ and $L_{j_2}^r$} is discussed in Section \ref{FTSC: subsec/Connectivity Maintenance Position Det}). Also to achieve the desired bearing vectors $\boldsymbol{g}_{ij_1}^{d}$, $\boldsymbol{g}_{ij_2}^{d}$, the follower agent must reach the bearing-guided target $\boldsymbol{p}_{iT}^b$.
\section{CBF-inspired Rigidity Maintenance and Formation Tracking}
\label{FTSC: sec/rigiditywithf}
This section next develops the CBF-QP framework for Subproblem \ref{sp}, using formation and rigidity targets. 
\subsection{Determination of Bearing-Guided Target \cite[Sec. 4.1]{saharsh2026cbfformation}}
\label{FTSC: subsec/Bearing guided Position Det} 
Given the desired bearing vectors \( \boldsymbol{g}_{ij_k}^d = [g_{ij_k}^{d_x},\ g_{ij_k}^{d_y}]^{\top}, k \in \{1, 2\} ,\) their slope is defined as \( m_{ij_k}^d = \frac{g_{ij_k}^{d_y}}{g_{ij_k}^{d_x}} \).
The bearing-guided target is defined as follows.
\begin{definition}[Bearing-Guided Target \(\boldsymbol{p}_{iT}^b\)]
Given the desired formation position vector $\boldsymbol{p}_i^d$, for agent $i$, as defined in \eqref{dpos}, the bearing-guided target \(\boldsymbol{p}_{iT}^b\) for follower agent $i$ is defined as $\boldsymbol{p}_{iT}^b = \boldsymbol{p}_{i}^d - \boldsymbol{p}_i$, and explicitly computed as 
$
    \boldsymbol{p}_{iT}^b = A_i^{-1} \boldsymbol{\sigma}_i,
$
where $A_i = \begin{bmatrix}
        -m_{ij_1}^d & 1 \\
        -m_{ij_2}^d & 1
    \end{bmatrix}, \boldsymbol{\sigma}_i = 
        [[-m_{ij_1}^d,\ 1] \boldsymbol{p}_{ij_1},\ [-m_{ij_2}^d,\ 1] \boldsymbol{p}_{ij_2}]^{\top} .$
\end{definition}
Furthermore, the bearing-guided target $\boldsymbol{p}_{iT}^b$ can be interpreted as the unique point of intersection of the two non-parallel lines \( L_{j_1} \) and \( L_{j_2} \), where each line \( L_{j_k} \) passes through the point \(\boldsymbol{p}_{ij_k}\) with slope $m_{ij_k}^d$ (cf. Fig. \ref{probset}). 

Additionally, at \(\boldsymbol{p}_{iT}^b = \boldsymbol{0}\), i.e., $\boldsymbol{p}_i = \boldsymbol{p}_i^d$, the cosine distance \(\left(1 - \langle \boldsymbol{g}_{ij_k}, \boldsymbol{g}^d_{ij_k} \rangle \right)\) is zero, implying alignment with the desired bearing directions.

\subsection{Determination of Rigidity-Critical Target}\label{FTSC: subsec/Connectivity Maintenance Position Det}
We preserve initial bearing rigidity by ensuring the critical reference node remains within the sensing range. A reference node is \textit{critical} if it is most likely to violate this constraint, i.e., $j_{\text{crit}} := \begin{cases}\arg\max_{j \in \{j_1,j_2\}} \|\boldsymbol{p}_{ij}(t)\|, &\text{if} \max_j \|\boldsymbol{p}_{ij}(t)\|> R_{\text{crit}},\\
\emptyset, &\text{otherwise},\end{cases}$ where $R_{\text{crit}}=\lambda-\delta_{\text{safe}}$, and $\delta_{\text{safe}}>0$ is a safety margin. 
\cl{Also, to prevent collinearity, breaking rigidity with the references, we require $\sin (\nu_i(t)) \geq \sin (\nu_{\text{safe}})$, where the angle $\nu_i(t) := |\sin^{-1}(\boldsymbol{g}_{ij_1}(t) \times \boldsymbol{g}_{ij_1}(t))|$ and the safety margin $\nu_{\text{safe}} \in (0, \pi)$ which ensures that $\nu_i(t) \neq 0 \ \text{or} \ \pi$}.
Consider $\psi_{j_1j_2} := \tan^{-1}(\frac{y_{i2} - y_{i1}}{x_{i2} - x_{i1}})$.  We then show that each follower $i \in \mathcal{V}_F$ preserves rigidity by maintaining prescribed bearing relations with its references.
\begin{definition}[Rigidity-Critical Target $\boldsymbol{p}_{iT}^r$]
\label{rc_target}
    For any agent $i \in \mathcal{V}_F$ connected to a set of reference nodes $j_1, j_2$, $d := \|\boldsymbol{p}_{j_2j_1}\| = \|\boldsymbol{p}_{ij_1} - \boldsymbol{p}_{ij_2}\|$, with sensing parameter $\eta \in \mathbb{R}^{+}$ satisfying $\frac{d}{2} < \eta < R_{\text{crit}}$, $m_{ij_2}^r = \tan(\psi_{j_1j_2}-\phi_i), m_{ij_1}^r = \tan(\psi_{j_1j_2}+\phi_i), \phi_i = \cos^{-1}(\frac{d}{2\eta})$, the rigidity-critical target $\boldsymbol{p}_{iT}^r$ is defined as
$
    \boldsymbol{p}_{iT}^r = (A_i^r)^{-1} \boldsymbol{\sigma}_i^r,
$
where $A_i^r = \begin{bmatrix}
        -m_{ij_1}^r & 1 \\
        -m_{ij_2}^r & 1
    \end{bmatrix}, \boldsymbol{\sigma}_i^r = 
        [[-m_{ij_1}^r,\ 1] \boldsymbol{p}_{ij_1},\ [-m_{ij_2}^r,\ 1] \boldsymbol{p}_{ij_2}]^{\top} .$
\end{definition}
The rigidity-critical target $\boldsymbol{p}_{iT}^r$ can be interpreted as the apex of an isosceles triangle formed on the base segment $L_{j_1 j_2}$, joining the reference nodes $j_1$ and $j_2$, with equal side lengths $\eta$ (cf. Fig.~\ref{probset}). \cl{The positive perpendicular distance of the target from the base $L_{j_1 j_2}$, $h^r(d) := \sqrt{\eta^2 - (d/2)^2} > 0$, ensures non-collinearity with the references and we have hysteresis angle margin as the target apex angle $\sin (\phi^{r}) = \sin (\pi - 2\cos^{-1}(\frac{d}{2\eta})) > \sin (\nu_{\text{safe}})$.} Equivalently, $\boldsymbol{p}_{iT}^r$ is the unique intersection of the two non-parallel lines $L_{j_1}^r$ and $L_{j_2}^r$ passing through $\boldsymbol{p}_{ij_1}$ and $\boldsymbol{p}_{ij_2}$ with slopes $m_{ij_1}^r$ and $m_{ij_2}^r$, respectively. 

\subsection{Controller Design}
We propose to solve the Subproblem \ref{sp} using $\boldsymbol{p}_{iT}$: 
\begin{align}\label{unified_target}
\boldsymbol{p}_{iT} := 
\begin{cases} 
\boldsymbol{p}_{iT}^r, \text{if } j_{\text{crit}} \not = \emptyset \ \text{or} \ \cl{\sin (\nu_i(t)) < \sin (\nu_{\text{safe}})}, \\
\boldsymbol{p}_{iT}^b, \text{if } j_{\text{crit}} = \emptyset \ \& \ \cl{\sin (\nu_i(t)) \geq \sin (\nu_{\text{safe}})},
\end{cases}
\end{align}
for each follower agent $i\in \mathcal{V}_{F}$. The objective is to satisfy the local subtask by aligning the velocity $\boldsymbol{v}_i$ with the prescribed orientation given by $\boldsymbol{p}_{iT}$. 
To accommodate the change in target's speed, we define the target velocity:
\begin{align}\label{velocity_target}
\boldsymbol{v}_{T} := 
\begin{cases} 
\boldsymbol{v}_{j_{\text{crit}}}, \text{if } j_{\text{crit}} \not = \emptyset \ \text{or} \ \cl{\sin (\nu_i(t)) < \sin (\nu_{\text{safe}})} , \\
\boldsymbol{v}_{l}, \text{if } j_{\text{crit}} = \emptyset \ \& \ \cl{\sin (\nu_i(t)) \geq \sin (\nu_{\text{safe}})} .
\end{cases}
\end{align}
As the reference nodes $j_1,j_2 \in \mathcal{N}_i$ move, the unified target results in a piecewise continuous reference trajectory $\boldsymbol{p}_{iT}$ with discontinuous jumps, based on the set $j_{\text{crit}}$, for the change in the target from $\boldsymbol{p}_{iT}^r$ to $\boldsymbol{p}_{iT}^b$ or vice versa.
 \subsubsection*{Construction of Reach CBF}
We define the time stamps at which discontinuous jumps in $\boldsymbol{p}_{iT}$ occur as the sequence $(t_k)_{k \in \mathbb{N}}$, where initial time $t_1 = 0$ and target at each time stamp after jump as $\boldsymbol{p}_{iT}^{(k)} := \boldsymbol{p}_{iT}(t_k^{+}), \boldsymbol{p}_{iT}^{(1)} = \boldsymbol{p}_{iT}(0)$. For the required alignment between $\boldsymbol{v}_i$ and $\boldsymbol{p}_{iT}$, based on the angle $\theta_i$ between $\boldsymbol{v}_i$ and $\boldsymbol{p}_{iT}$, we define the sequence of reach CBFs $(b_i^{(k)}(t))_{k \in \mathbb{N}}, \forall t \in \Omega_k$, where $\Omega_k : =[t_{k}, t_{k+1})$ as
\begin{align}\label{cbfr}
    b_i^{(k)}(\hat{\boldsymbol{v}}_i,\hat{\boldsymbol{p}}_{iT}^{(k)}) := \langle\hat{\boldsymbol{v}}_i,\hat{\boldsymbol{p}}_{iT}^{(k)}\rangle - \cos(\theta_{it_{k}}),k \in \mathbb{N},
\end{align}
where $\theta_{it_{k}}=\cos^{-1}(\langle\hat{\boldsymbol{v}}_i(t_{k}),\hat{\boldsymbol{p}}_{iT}^{(k)}(t_{k})\rangle)$, $\hat{\boldsymbol{v}}_i=\frac{{\boldsymbol{v}}_i}{\|{\boldsymbol{v}}_i\|}$, $\hat{\boldsymbol{p}}_{iT}^{(k)}=\frac{{\boldsymbol{p}}_{iT}^{(k)}}{\|{\boldsymbol{p}}_{iT}^{(k)}\|}$, and $\max_{ \hat{\boldsymbol{v}}_i, \hat{\boldsymbol{p}}_{iT} \in \mathcal{S}} \  b_i^{(k)}(\hat{\boldsymbol{v}}_i,\hat{\boldsymbol{p}}_{iT}^{(k)})=1 - \cos(\theta_{it_{k}})$. The compact set $X_i^{(k)}$ is defined as $X_i^{(k)} := \{(\hat{\boldsymbol{v}}_i,\hat{\boldsymbol{p}}_{iT}^{(k)}) \in \mathcal{S} \times \mathcal{S}: b_i^{(k)}(\hat{\boldsymbol{v}}_i,\hat{\boldsymbol{p}}_{iT}^{(k)}) \geq 0\},$
and the boundary set $\partial X_i^{(k)} := \{(\hat{\boldsymbol{v}}_i,\hat{\boldsymbol{p}}_{iT}^{(k)}) \in \mathcal{S} \times \mathcal{S}: b_i^{(k)}(\hat{\boldsymbol{v}}_i,\hat{\boldsymbol{p}}_{iT}^{(k)}) = 0\}$. Here for given $\hat{\boldsymbol{p}}_{iT}^{(k)}$, $\forall \hat{\boldsymbol{v}}_i \in \mathcal{S}$, $b_i^{(k)}(\hat{\boldsymbol{v}}_i, \hat{\boldsymbol{p}}_{iT}^{(k)})$ has unique maxima as required by definition of the reach CBF \cite[Definition 3]{saharsh2026cbfformation}. Thus, we enforce the reachability of the target set $R_i^{(k)}:= \{(\hat{\boldsymbol{v}}_i,\hat{\boldsymbol{p}}_{iT}^{(k)}) \in \mathcal{S} \times \mathcal{S}: b_i^{(k)}(\hat{\boldsymbol{v}}_i,\hat{\boldsymbol{p}}_{iT}^{(k)}) \geq \mu_i^{(k)}\},$
where $\mu_i^{(k)} = \cos(\gamma_i^{(k)}) - \cos(\theta_{it_{k}}) > 0$, for the required alignment, with an arbitrarily chosen small angle $0<\gamma_i^{(k)} \leq \|\theta_{it_{k}}\|$, which results in $R_i^{(k)} \subseteq X_i^{(k)}$. Here, the parameter $\gamma_i^{(k)}$ represents the alignment error tolerance between $\hat{\boldsymbol{v}}_i$ and $\hat{\boldsymbol{p}}_{iT}^{(k)}$. 
Now, after simplifying the reach CBF constraint in \cite[Definition 3]{saharsh2026cbfformation}, with the reach CBF for target alignment as defined in \eqref{cbfr}, together with the system's dynamics as in (\ref{sys}), we obtain the linear constraint in $\boldsymbol{u}_i$ as $K_f^{(k)}({\boldsymbol{v}}_i, {\boldsymbol{p}}_{iT}^{(k)}) + (K_h^{(k)})^{\top}({\boldsymbol{v}}_i, {\boldsymbol{p}}_{iT}^{(k)})\boldsymbol{u}_i \geq \frac{\mu_i^{(k)}}{\tau_i^{(k)}}, \forall t \in \Omega_k$ with
\begin{align}\label{movet}
&(K_h^{(k)})^{\top}({\boldsymbol{v}}_i, {\boldsymbol{p}}_{iT}^{(k)}) = (\hat{\boldsymbol{p}}_{iT}^{(k)})^{\top}\frac{1}{\|\boldsymbol{v}_i\|}\left[I_2 - \hat{\boldsymbol{v}}_i\hat{\boldsymbol{v}}_i^{\top}\right]h_{v_i}, \nonumber \\
&K_f^{(k)}({\boldsymbol{v}}_i, {\boldsymbol{p}}_{iT}^{(k)}) = (\hat{\boldsymbol{p}}_{iT}^{(k)})^{\top}\frac{1}{\|\boldsymbol{v}_i\|}\left[I_2 - \hat{\boldsymbol{v}}_i\hat{\boldsymbol{v}}_i^{\top}\right]f_{v_i} + \hat{\boldsymbol{v}}_i^{\top}{\dot{\hat{\boldsymbol{p}}}_{iT}^{(k)}}.
\end{align}
Using CBF-QP in \cite[Theorem 1]{saharsh2026cbfformation} and the constraint \eqref{movet}, our goal is to reach the set $R_i^{(k)}$ in time $\tau_i^{(k)}$, where $\hat{\boldsymbol{v}}_i$ and $\hat{\boldsymbol{p}}_{iT}^{(k)}$ are aligned with small angle error $\gamma_i^{(k)}$, i.e., $\cos(\theta_i(\tau_i^{(k)})) \geq \cos(\gamma_i^{(k)})$, while being within the compact set $X_i^{(k)}$. 
\begin{figure*}[t]
\begin{overpic}[width=\linewidth]{fig2.pdf}
    \put(10, -3){\small (a)}   
    \put(35, -3){\small (b)}
    \put(60, -3){\small (c)}
    \put(86, -3){\small (d)}
\end{overpic}
\vspace{2pt}
    \caption{Simulation validating our approach for formation tracking with rigidity maintenance under disturbance with 8 agents. Video: (a) https://bit.ly/43JrDUl, (b) https://bit.ly/4vmm6zg, (c) https://bit.ly/3QPdCS7}
    \label{fig2}
\vspace{-5mm}
\end{figure*}

\subsubsection*{Reach CBF-based Target Tracking}
We now formulate a QP optimization problem for the agent $i$, using constraint \eqref{movet} stitched for all intervals $\Omega_k, \forall k \in \mathbb{N}$.
\begin{assum} \label{FTSC: assum/anti-align}
    The bearing-guided target direction \(\hat{\boldsymbol{p}}_{iT}\) and the agent’s velocity \(\boldsymbol{v}_i\) follows $\langle\hat{\boldsymbol{p}}_{iT}^{(k)}, \hat{\boldsymbol{v}}_i(t)\rangle \neq -1, \forall t \in \mathbb{R}_{0}^{+}, \forall k \in \mathbb{N}$.
\end{assum}
The constraint \eqref{movet} becomes invalid in the degenerate case when $\left[I_2 - \boldsymbol{v}_i\boldsymbol{v}_i^{\top}\right]h_{v_i} = 0_{2}$ or $\langle\boldsymbol{p}_{iT}^{(k)},\boldsymbol{v}_i^{\perp}\rangle = 0$, where the vector $\boldsymbol{v}_i^{\perp}$ is the orthogonal component of $\boldsymbol{v}_i$. However, $\langle\boldsymbol{p}_{iT}^{(k)},\boldsymbol{v}_i^{\perp}\rangle = 0$ implies either alignment, which requires no correction, or anti-alignment, which is excluded by Assumption \ref{FTSC: assum/anti-align}. Since the system (\ref{sys}) is controllable ($Null(h_{v_i}) = \emptyset$), the constraint remains valid (i.e. $\left[I_2 - \boldsymbol{v}_i\boldsymbol{v}_i^{\top}\right]h_{v_i} \neq 0_{2}$). The QP-based control formulation is presented below. 
\begin{theorem}
    Consider the heterogeneous MAS (modeled as \eqref{sys}) connected under a leader-follower framework $\mathcal{G}(t)$, satisfying Assumptions \ref{A1}-\ref{FTSC: assum/anti-align} with limited sensing range $\lambda$. Then for any $i \in \mathcal{V}_F$, given the reach CBF $b_i^{(k)}(\hat{\boldsymbol{v}}_i,\hat{\boldsymbol{p}}_{iT}^{(k)})$ as in \eqref{cbfr}, along with the constraint \eqref{movet}, using $\boldsymbol{p}_{iT}(t), t \in \Omega_k, \forall k \in \mathbb{N}$ as in \eqref{unified_target}, if each follower agent applies the control input $\boldsymbol{u}_i$, $\forall t \in \Omega_k, \forall k \in \mathbb{N}$:  
\begin{align}\label{QPws}
    &{\boldsymbol{u}}_i = \arg \min_{\boldsymbol{q} \in \mathbb{R}^2} \quad \frac{1}{2}(\boldsymbol{q} -  \boldsymbol{u}_{iv})^{\top}(\boldsymbol{q}-  \boldsymbol{u}_{iv}), \nonumber \\
    &\text{s.t.} \quad K_{f}^{(k)}({\boldsymbol{v}}_i, {\boldsymbol{p}}_{iT}^{(k)}) + (K_{h}^{(k)})^{\top}({\boldsymbol{v}}_i, {\boldsymbol{p}}_{iT}^{(k)})\boldsymbol{q} \geq \frac{\mu_i^{(k)}}{\tau_i^{(k)}}, 
\end{align}
with $K_{f}^{(k)}, K_{h}^{(k)}$ as in \eqref{movet}, along with the controller $\boldsymbol{u}_{iv} = h_{v_i}^{-1}(-f_{v_i}  - K_{i1}\boldsymbol{\xi}_{ip} + K_{i1}(\boldsymbol{v}_T - \boldsymbol{v}_i) + K_{i2}\boldsymbol{\xi}_{iv}),$
where $K_{i2} = \kappa_{i2}I_{2},K_{i1} = \kappa_{i1}I_{2}, 
\kappa_{i1}>1,\kappa_{i2}>1+\frac{\kappa_{i1}}{2}, \boldsymbol{\xi}_{ip} = \boldsymbol{p}_{iT}, \boldsymbol{\xi}_{iv} = \boldsymbol{v}_T - \boldsymbol{v}_i + K_{i1}\boldsymbol{\xi}_{ip}$, \cl{and
$\zeta_i\sqrt{ \Delta_j^2/2\alpha} < h^r(d) - \sin\nu_{\text{safe}}\, R_{\text{crit}}^2/d, V_i^{(k)}(t_k^+)\le \zeta_iV_i^{(k)}(t_k^-), \forall k \in \mathbb{N}, \text{where} \  V_i^{(k)} = \frac{1}{2}\boldsymbol{\xi}_{ip}^{\top}\boldsymbol{\xi}_{ip} + \frac{1}{2}\boldsymbol{\xi}_{iv}^{\top}\boldsymbol{\xi}_{iv}, \zeta_i \geq 1, \Delta_j = (\Gamma_{jf}v_{max} + (\Gamma_{jh}v_{max} 
 + 1)\|\boldsymbol{u}_j\|), \alpha = \min \{-\frac{(1 - \kappa_{i1})}{2}, -\frac{(2 + \kappa_{i1}- 2\kappa_{i2})}{2}\}$}, then the underlying communication graph preserves rigidity as in Definition \ref{def_connectivity} and unique formation is achieved under bearing rigid graph as in Definition \ref{mform}. Thus, Problem \ref{mainprob} is solved.
\label{FTSC: thm/QP}
\end{theorem}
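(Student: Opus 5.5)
The plan is to reduce the statement to Theorem~\ref{theo1}. That theorem needs two things for every follower $i\in\mathcal{V}_F$: (i) the reference links $(i,j_1),(i,j_2)$ of the subgraph $\hat{\mathcal{G}}$ stay inside the sensing ball and stay non-collinear for all $t$, and (ii) the formation subtask $\mathcal{F}_i(\boldsymbol{g}(t))$ holds. Given (i), Lemma~\ref{lem2} and the subsequent discussion give $\det(\hat{\mathcal{B}}_{ff}(t))\neq 0$. Monotonicity of rank under edge addition then gives $\det(\mathcal{B}_{ff}(t))\neq0$, which is Definition~\ref{def_connectivity}. With (ii) added, Lemma~\ref{lem1} yields the unique rigid formation of Definition~\ref{mform}. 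The proof therefore consists of per-follower arguments on each interval $\Omega_k$, stitched together across the switching instants $t_k$.

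\textbf{Step 1 (QP feasibility and alignment on each $\Omega_k$).} Fix $k$. The QP \eqref{QPws} has one affine constraint. By Assumption~\ref{FTSC: assum/anti-align} and invertibility of $h_{v_i}$, $K_h^{(k)}\neq 0$ whenever $\hat{\boldsymbol{v}}_i\neq\hat{\boldsymbol{p}}_{iT}^{(k)}$, so the constraint is always feasible. It also holds at exact alignment, since then $b_i^{(k)}$ attains its maximum. The QP therefore has a unique, closed-form, locally Lipschitz solution. Theorem~\ref{thm1}, with $b_i^{(k)}$ from \eqref{cbfr}, then gives forward invariance of $X_i^{(k)}$ and reachability of $R_i^{(k)}$ within $\tau_i^{(k)}$. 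In other words, the velocity direction of agent $i$ is steered into a $\gamma_i^{(k)}$-cone around $\boldsymbol{p}_{iT}^{(k)}$.

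\textbf{Step 2 (tracking error via backstepping).} Let $V_i^{(k)}=\tfrac12\|\boldsymbol{\xi}_{ip}\|^2+\tfrac12\|\boldsymbol{\xi}_{iv}\|^2$. Differentiate it along \eqref{sys}, using $\dot{\boldsymbol{\xi}}_{ip}=\boldsymbol{v}_T-\boldsymbol{v}_i$ for the relative target. Substituting the nominal input $\boldsymbol{u}_{iv}$ and applying Young's inequality on the cross terms should give
\[
\dot V_i^{(k)}\le -2\alpha V_i^{(k)}+\|\boldsymbol{\xi}_{iv}\|\Delta_j .
\]
The gain conditions $\kappa_{i1}>1$ and $\kappa_{i2}>1+\kappa_{i1}/2$ are exactly what make $\alpha>0$. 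The term $\Delta_j$ collects the unknown reference acceleration $\dot{\boldsymbol{v}}_T$, bounded through the Lipschitz constants $\Gamma_{jf},\Gamma_{jh}$, $v_{\max}$ and $\|\boldsymbol{u}_j\|$; Assumption~\ref{A2} lets agent $i$ measure $\boldsymbol{u}_j$. Absorbing this term yields ultimate boundedness,
\[
\limsup_t\|\boldsymbol{\xi}_{ip}\|\le\sqrt{\Delta_j^2/2\alpha}.
\]
The deviation between $\boldsymbol{u}_i$ and $\boldsymbol{u}_{iv}$ caused by the CBF constraint has to be handled at this point. I would argue that it acts only in the direction $K_h^{(k)}$, which is orthogonal to $\hat{\boldsymbol{v}}_i$, and that it rotates $\boldsymbol{v}_i$ toward $\boldsymbol{p}_{iT}$. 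Its contribution to $\dot V_i$ should then be nonpositive, or at least bounded in a way that can be folded into the estimate above.

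\textbf{Step 3 (switching and rigidity, the main obstacle).} Across switches, $V_i^{(k)}(t_k^+)\le\zeta_iV_i^{(k)}(t_k^-)$. Combined with the decay in Step 2, this bounds the excursion of $\boldsymbol{\xi}_{ip}$ by $\zeta_i\sqrt{\Delta_j^2/2\alpha}$. While the rigidity-critical target is active, the agent stays within this distance of the apex $\boldsymbol{p}_{iT}^r$. The apex lies at distance $\eta<R_{\text{crit}}<\lambda$ from both references, at height $h^r(d)$ above the base, with apex angle satisfying $\sin\phi^r>\sin\nu_{\text{safe}}$. The assumed inequality $\zeta_i\sqrt{\Delta_j^2/2\alpha}<h^r(d)-\sin\nu_{\text{safe}}R_{\text{crit}}^2/d$ should guarantee that a point this close to the apex keeps distance below $\lambda$ to both references. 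It should also guarantee that the point's perpendicular height above the base stays above $\sin\nu_{\text{safe}}R_{\text{crit}}^2/d$; elementary triangle geometry (area $=\tfrac12\,\|\boldsymbol{p}_{ij_1}\|\,\|\boldsymbol{p}_{ij_2}\|\sin\nu_i=\tfrac12 d\cdot$height) turns this height bound into $\sin\nu_i\ge\sin\nu_{\text{safe}}$. This makes the switching rule \eqref{unified_target} a hysteresis that never lets a reference link break or become collinear. I expect the delicate part to be excluding Zeno switching and checking that the bound survives repeated jumps, since only a dwell-time-type argument using the decay rate $\alpha$ can prevent the factors $\zeta_i$ from accumulating. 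Once the formation error is small, the agent sits near $\boldsymbol{p}_i^d$, which is feasible by Assumption~\ref{A1}. Hence $j_{\text{crit}}=\emptyset$ eventually, switching stops, and with $\boldsymbol{v}_T=\boldsymbol{v}_l$ the bearing target gives $\boldsymbol{p}_{iT}^b\to0$, i.e. $\mathcal{F}_i$. Applying Theorem~\ref{theo1} then completes the proof.
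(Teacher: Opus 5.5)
Your plan follows the paper's proof essentially step for step: reduction to Theorem~\ref{theo1}, reach-CBF integration for alignment, the backstepping function $V_i^{(k)}$ with Young's inequality giving $\dot V_i^{(k)}\le-2\alpha V_i^{(k)}+\Delta_j^2/2$, the jump factor $\zeta_i$, and the triangle-area identity for $\sin\nu_i$. However, the step you defer as ``the delicate part'' is a genuine gap, and your whole Step~3 rests on it. Without a lower bound on the time between switches, your estimate grows like $\zeta_i^k$ under fast switching, so $\|\boldsymbol{\xi}_{ip}\|\le\zeta_i\sqrt{\Delta_j^2/2\alpha}$ does not follow.

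The paper supplies this ingredient. A reference distance must change by a margin $\delta_i$ at relative speed at most $2v_{\max}$, so $\tau_{d,i}\ge\delta_i/(2v_{\max})$. Integrating the decay over each interval gives $V_i(t_{k+1}^-)\le\zeta_i e^{-\alpha\delta_i/v_{\max}}V_i(t_k^-)+\text{const}$. Requiring the contraction $r_i:=\zeta_i e^{-\alpha\delta_i/v_{\max}}<1$, i.e.\ $\alpha>\frac{v_{\max}}{\delta_i}\ln\zeta_i$, adds gain conditions beyond those in the statement. When you fill this in, watch three things. First, it needs $\delta_i>0$ at switching instants, which \eqref{unified_target} does not provide: the distance test uses the single threshold $R_{\text{crit}}$ in both directions, so at an $r\to b$ switch triggered by distance the margin is zero, and you must add a hysteresis band. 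Second, the level the recursion yields carries a factor $1/(1-r_i)$, so it must be compared with the hypothesis rather than identified with $\zeta_i\sqrt{\Delta_j^2/2\alpha}$. Third, any such bound is only ultimate. At the instant of a switch into mode $r$ the agent sits at distance at least $R_{\text{crit}}-\eta$ from the apex, which is exactly when the link is at risk. So ``stays within this distance of the apex'' needs invariance of a sublevel set of $V_i$, not UUB alone.

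A second step fails outright. You expect $\zeta_i\sqrt{\Delta_j^2/2\alpha}<h^r(d)-\sin\nu_{\text{safe}}R_{\text{crit}}^2/d$ to keep both reference distances below $\lambda$. A point within $\rho$ of the apex is within $\eta+\rho$ of each reference, so you need $\rho<\lambda-\eta$, which this inequality does not imply: with $\lambda=10$, $R_{\text{crit}}=9$, $\eta=8$, $d=2$, $\sin\nu_{\text{safe}}=0.01$ it allows $\rho$ up to about $7.5$. Moreover, the triangle-area estimate $\sin\nu_i\ge d\,(h^r(d)-\rho)/R_{\text{crit}}^2$ takes $\|\boldsymbol{p}_{ij_k}\|\le R_{\text{crit}}$ as an input. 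The distance bound must therefore be proved first and separately; the paper attributes it to the margin $R_{\text{crit}}-\eta$ of the apex construction, not to this inequality. Also, in mode $r$ triggered by distance, $\max_k\|\boldsymbol{p}_{ij_k}\|>R_{\text{crit}}$ by definition, so there you would need $\lambda^2$ in the denominator.

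Smaller points:
\begin{itemize}
\item In Step~1 the constraint does not hold at exact alignment. There $K_h^{(k)}=\boldsymbol{0}$ and $K_f^{(k)}=\hat{\boldsymbol{v}}_i^{\top}\dot{\hat{\boldsymbol{p}}}_{iT}^{(k)}=0$, so it reads $0\ge\mu_i^{(k)}/\tau_i^{(k)}>0$. More generally, $\dot b_i^{(k)}\ge\mu_i^{(k)}/\tau_i^{(k)}$ cannot persist on a long $\Omega_k$ because $b_i^{(k)}$ is bounded above, so the constraint must be released once $R_i^{(k)}$ is reached.
\item The claim that switching stops because $\boldsymbol{p}_i^d$ is ``feasible by Assumption~\ref{A1}'' is unsupported. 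Assumption~\ref{A1} concerns $\mathcal{G}(0)$ at the initial positions and says nothing about $\|\boldsymbol{p}^d_{ij_k}\|\le R_{\text{crit}}$.
\item Your concern about the QP correction entering $\dot V_i$ is legitimate; the paper simply analyzes the nominal loop. But orthogonality does not settle it. Even for a double integrator the correction adds $-\omega(\boldsymbol{v}_T+K_{i1}\boldsymbol{p}_{iT})^{\top}(I_2-\hat{\boldsymbol{v}}_i\hat{\boldsymbol{v}}_i^{\top})\hat{\boldsymbol{p}}_{iT}/\|\boldsymbol{v}_i\|$ with multiplier $\omega\ge0$. Its $K_{i1}$ part is nonpositive, but its $\boldsymbol{v}_T$ part has no sign.
\end{itemize}
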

\begin{proof}
With initial $\boldsymbol{v}_i(t_{k})$, the barrier function $b_i^{(k)}(\boldsymbol{v}_i(t_{k}),\boldsymbol{p}_{iT}(t_{k})) = 0, \forall k \in \mathbb{N}$, and the set $R_i^{(k)}$ is reached by integrating the constraint in \eqref{movet} over $[t_{k},\tau_i^{(k)}] \subseteq \Omega_k$ with lower bound $\frac{\mu_i^{(k)}}{\tau_i^{(k)}}$, and using the following arguments analogous to the proof of \cite[Theorem 1]{saharsh2026cbfformation}, it follows that $b_i^{(k)}(\hat{\boldsymbol{v}}_i(\tau_i^{(k)})) \geq \mu_i^{(k)}$, ensuring $\cos(\theta_i(\tau_i^{(k)})) \geq \cos(\gamma_i^{(k)})$. Additionally, we need the relative target position $\boldsymbol{\xi}_{ip} \rightarrow \boldsymbol{0}$ and the relative target velocity $\boldsymbol{\xi}_{iv} \rightarrow \boldsymbol{0}$ as $ (\boldsymbol{\xi}_{ip},\boldsymbol{\xi}_{iv}) = (0,0) \Rightarrow \boldsymbol{p}_{iT} = \boldsymbol{0}, \boldsymbol{v}_{T} = \boldsymbol{v}_i \Rightarrow \boldsymbol{g}_{ij} = \boldsymbol{g}_{ij}^d,\forall j \in \{j_1,j_2\}$. 
 Consider a Lyapunov function $V_i^{(k)}: \mathbb{R}^2 \times \mathbb{R}^2 \rightarrow \mathbb{R}_0^+$, $ V_i^{(k)} = \frac{1}{2}\boldsymbol{\xi}_{ip}^{\top}\boldsymbol{\xi}_{ip} + \frac{1}{2}\boldsymbol{\xi}_{iv}^{\top}\boldsymbol{\xi}_{iv} , \forall t \in \Omega_k $.
 With the system dynamics \eqref{sys}, $f_{v_i}(\boldsymbol{0}) = \boldsymbol{0}, h_{v_i}(\boldsymbol{0}) = I_2, \forall i$, we consider the nominal error dynamics written as $\dot{\boldsymbol{\xi}}_{ip} = \boldsymbol{v}_T - \boldsymbol{v}_i = -K_{i1}\boldsymbol{\xi}_{ip} + \boldsymbol{\xi}_{iv}$, and if $j_{\text{crit}} = \emptyset$ and \cl{$(\sin (\nu_i(t)) \geq \sin (\nu_{\text{safe}}))$}, $\dot{\boldsymbol{\xi}}_{iv} = K_{i1}\dot{\boldsymbol{\xi}}_{ip} -f_{v_i} - h_{v_i}\boldsymbol{u}_i = K_{i1}\boldsymbol{\xi}_{ip} - K_{i2}\boldsymbol{\xi}_{iv}$ and otherwise, $\|\dot{\boldsymbol{\xi}}_{iv}\| = \|K_{i1}\dot{\boldsymbol{\xi}}_{ip} + f_{v_j}+h_{v_j}\boldsymbol{u}_j-f_{v_i} - h_{v_i}\boldsymbol{u}_i\| \leq \|K_{i1}\boldsymbol{\xi}_{ip} - K_{i2}\boldsymbol{\xi}_{iv}\| + \Delta_j, \Delta_j = (\Gamma_{jf}v_{max} + (\Gamma_{jh}v_{max} 
 + 1)\|\boldsymbol{u}_j\|)$. By differentiating Lyapunov and substituting the error dynamics, we get
$\dot{V}^{(k)} = \boldsymbol{\xi}_{ip}^{\top}\dot{\boldsymbol{\xi}}_{ip} + \boldsymbol{\xi}_{iv}^{\top}\dot{\boldsymbol{\xi}}_{iv}  \leq \boldsymbol{\xi}_{ip}^{\top}\boldsymbol{\xi}_{iv} - \boldsymbol{\xi}_{ip}^{\top}K_{i1}\boldsymbol{\xi}_{ip} + \boldsymbol{\xi}_{iv}^{\top}K_{i1}\boldsymbol{\xi}_{ip} - \boldsymbol{\xi}_{iv}^{\top}K_{i2}\boldsymbol{\xi}_{iv} + \Delta_j\|\boldsymbol{\xi}_{iv}\|\leq  \frac{(1 + \kappa_{i1})}{2}(\|\boldsymbol{\xi}_{ip}\|^{2} + \|\boldsymbol{\xi}_{iv}\|^{2}) - \kappa_{i1}\|\boldsymbol{\xi}_{ip}\|^{2} - \kappa_{i2}\|\boldsymbol{\xi}_{iv}\|^{2} + \frac{\Delta_j^2}{2} + \frac{\|\boldsymbol{\xi}_{iv}\|^{2}}{2}\leq \frac{(1 - \kappa_{i1})}{2}\|\boldsymbol{\xi}_{ip}\|^{2} + \frac{(2 + \kappa_{i1}- 2\kappa_{i2})}{2}\|\boldsymbol{\xi}_{iv}\|^{2} + \frac{\Delta_j^2}{2}$,
where the upper bound was obtained using Young's inequality and the control gain $\kappa_{i1}>0$.
Since $\kappa_{i1}>1,\kappa_{i2}>1+\frac{\kappa_{i1}}{2}$, we have $\dot{V}_i^{(k)} < -2\alpha{V}_i^{(k)} + \frac{\Delta_j^2}{2}, \alpha = \min \{-\frac{(1 - \kappa_{i1})}{2}, -\frac{(2 + \kappa_{i1}- 2\kappa_{i2})}{2}\}$. So, the error $\|[\boldsymbol{\xi}_{ip}^{\top},\boldsymbol{\xi}_{iv}^{\top}]\|$ is uniformly ultimately bounded with $\|[\boldsymbol{\xi}_{ip}^{\top},\boldsymbol{\xi}_{iv}^{\top}]\| \leq \sqrt\frac{\Delta_j^2}{2\alpha}$ and in case $j_{\text{crit}} = \emptyset, \Delta_j = 0$, the equilibrium $(\boldsymbol{\xi}_{ip}^{\top},\boldsymbol{\xi}_{iv}) = (\boldsymbol{0}, \boldsymbol{0})$ is exponentially stable.\\
Since $\boldsymbol{p}_{iT}$ switches with the presence of a critical reference node, the closed-loop dynamics constitute a switched nonlinear system. 
We model the target trajectory as a switched non-linear system. The analysis follows as a direct consequence of the well-known switched-systems results in Section 3.2 of Switching in Systems and Control by Daniel Liberzon. Let the switching signal $\sigma_i(t)\in\{r,b\}$ denote the active subsystem for follower agent $i$, where
\[
\sigma_i(t)=
\begin{cases}
r, & \text{if } j_{\text{crit}} \not = \emptyset \ \text{or} \ {(\sin (\nu_i(t)) < \sin (\nu_{\text{safe}}))},\\
b, & \text{if } j_{\text{crit}} = \emptyset \ \text{and} \ {(\sin (\nu_i(t)) \geq \sin (\nu_{\text{safe}}))}.
\end{cases}
\]
The corresponding target definitions are $\boldsymbol{p}_{iT}^{r}, \boldsymbol{p}_{iT}^{b},$ with target velocities $\boldsymbol{v}_{T}^{r}=\boldsymbol{v}_{j_{\text{crit}}}, \boldsymbol{v}_{T}^{b}=\boldsymbol{v}_l.$ Define the subsystem-dependent tracking errors as
\begin{align}
\boldsymbol{\xi}_{ip}^{\sigma_i} &= \boldsymbol{p}_{iT}^{\sigma_i},\\
\boldsymbol{\xi}_{iv}^{\sigma_i} &= \boldsymbol{v}_{T}^{\sigma_i}-\boldsymbol{v}_i+K_{i1}\boldsymbol{\xi}_{ip}^{\sigma_i},
\end{align}
with the Lyapunov function candidate
\begin{align}\label{switch_lya}
V_i^{\sigma_i}
=
\frac{1}{2}\|\boldsymbol{\xi}_{ip}^{\sigma_i}\|^2
+
\frac{1}{2}\|\boldsymbol{\xi}_{iv}^{\sigma_i}\|^2.
\end{align}

Using the nominal controller (17), the closed-loop error dynamics for both subsystems can be written as
\begin{align}
\dot{\boldsymbol{\xi}}_{ip}^{\sigma_i}
&=
-K_{i1}\boldsymbol{\xi}_{ip}^{\sigma_i}
+
\boldsymbol{\xi}_{iv}^{\sigma_i},
\\
\dot{\boldsymbol{\xi}}_{iv}^{\sigma_i}
&=
K_{i1}\boldsymbol{\xi}_{ip}^{\sigma_i}
-
(K_{i2}\boldsymbol{\xi}_{iv}^{\sigma_i} + P^{\sigma_i}),
\end{align}
where $P^{r} = 0, P^{b} = f_{v_j} + h_{v_j}\boldsymbol{u}_j$
Differentiating \eqref{switch_lya} along the trajectories yields
\begin{align}
\dot V_i^{\sigma_i}
&=
(\boldsymbol{\xi}_{ip}^{\sigma_i})^\top
\dot{\boldsymbol{\xi}}_{ip}^{\sigma_i}
+
(\boldsymbol{\xi}_{iv}^{\sigma_i})^\top
\dot{\boldsymbol{\xi}}_{iv}^{\sigma_i}
\nonumber\\
&\leq
-(\boldsymbol{\xi}_{ip}^{\sigma_i})^\top
K_{i1}\boldsymbol{\xi}_{ip}^{\sigma_i}
+
(\boldsymbol{\xi}_{ip}^{\sigma_i})^\top
\boldsymbol{\xi}_{iv}^{\sigma_i}
+
(\boldsymbol{\xi}_{iv}^{\sigma_i})^\top
K_{i1}\boldsymbol{\xi}_{ip}^{\sigma_i}
-
(\boldsymbol{\xi}_{iv}^{\sigma_i})^\top
K_{i2}\boldsymbol{\xi}_{iv}^{\sigma_i} + \Delta_j\|\boldsymbol{\xi}_{iv}\|,
\end{align}
where $\Delta_j = (\Gamma_{jf}v_{max} + (\Gamma_{jh}v_{max} 
 + 1)\|\boldsymbol{u}_j\|)$.
Using Young's inequality, together with $K_{i1}=\kappa_{i1}I_2, K_{i2}=\kappa_{i2}I_2,$ we obtain
\begin{align}
\dot V_i^{\sigma_i}
\leq
-\frac{\kappa_{i1}-1}{2}
\|\boldsymbol{\xi}_{ip}^{\sigma_i}\|^2
-
\frac{2\kappa_{i2}-(1+\kappa_{i1})}{2}
\|\boldsymbol{\xi}_{iv}^{\sigma_i}\|^2+ \frac{\Delta_j^2}{2} + \frac{\|\boldsymbol{\xi}_{iv}\|^{2}}{2}.
\end{align}
Define $\alpha = \min \{-\frac{(1 - \kappa_{i1})}{2}, -\frac{(2 + \kappa_{i1}- 2\kappa_{i2})}{2}\}.$ Then
\begin{align}\label{exp_decay}
\dot V_i^{\sigma_i}
\leq
-2\alpha_iV_i^{\sigma_i} + \frac{\Delta_j^2}{2},
\end{align}
provided $\kappa_{i1}>1, \kappa_{i2}>1 + \frac{\kappa_{i1}}{2}.$ Thus, each subsystem is uniformly ultimately bounded. Next, we analyze the effect of switching. Let $t_k$ denote a switching instant. Since the target changes discontinuously from $\boldsymbol{p}_{iT}^{r}$ to $\boldsymbol{p}_{iT}^{b}$ (or vice versa), the error coordinates undergo a bounded jump. Define the augmented error state $x_i^{\sigma_i}
=
\begin{bmatrix}
\boldsymbol{\xi}_{ip}^{\sigma_i}\\
\boldsymbol{\xi}_{iv}^{\sigma_i}
\end{bmatrix}.$
At switching time $t_k$, $x_i^{\sigma_i(t_k^+)}
=
x_i^{\sigma_i(t_k^-)}
+
\Delta_i,$ where the switching offset satisfies
\begin{align}
\Delta_i
=
\begin{bmatrix}
\boldsymbol{p}_{iT}^{b}-\boldsymbol{p}_{iT}^{r}\\
(\boldsymbol{v}_{l}-\boldsymbol{v}_{j_{\text{crit}}})
+
K_{i1}(\boldsymbol{p}_{iT}^{b}-\boldsymbol{p}_{iT}^{r})
\end{bmatrix},
\end{align}
with finite norm $D_i:=\|\Delta_i\|<\infty.$
The boundedness assumption $D_i=\|\Delta_i\|<\infty$ is naturally satisfied in the considered multi-agent setting. Since all agents operate within a confined workspace, their positions remain uniformly bounded for all time. Furthermore, the agent velocities are constrained by a known maximum speed $v_{\max}$ due to actuator limits. Consequently, both the target mismatch term $(\boldsymbol{p}_{iT}^{b}-\boldsymbol{p}_{iT}^{r})$ and the velocity difference term $(\boldsymbol{v}_l-\boldsymbol{v}_{j_{\text{crit}}})$ remain bounded for all switching instants $t_k$. In addition, the gain matrix $K_{i1}$ is constant and finite. Therefore, the switching-induced state jump $\Delta_i$ is uniformly bounded, ensuring the existence of a finite constant $D_i$ independent of time and switching index $k$.
Using the definition of the Lyapunov functions,
\begin{align}
V_i^{\sigma_i(t_k^+)}
&=
\frac{1}{2}
\|x_i^{\sigma_i(t_k^-)}+\Delta_i\|^2
\nonumber\\
&=
V_i^{\sigma_i(t_k^-)}
+
(x_i^{\sigma_i(t_k^-)})^\top\Delta_i
+
\frac{1}{2}D_i^2.
\end{align}

Applying the Cauchy--Schwarz inequality, $(x_i^{\sigma_i(t_k^-)})^\top\Delta_i
\leq
\|x_i^{\sigma_i(t_k^-)}\|D_i,$ and using $\|x_i^{\sigma_i(t_k^-)}\|
=
\sqrt{2V_i^{\sigma_i(t_k^-)}},$ we obtain $V_i^{\sigma_i(t_k^+)}
\leq
V_i^{\sigma_i(t_k^-)}
+
D_i\sqrt{2V_i^{\sigma_i(t_k^-)}}
+
\frac{1}{2}D_i^2.$ To establish stability of the switched closed-loop system, we now derive a lower bound on the dwell time between consecutive switching instants.

From Theorem 3, once agent $i$ has the required reference neighbors, the inter-agent distances remain strictly inside the sensing region with a positive sensing and angle margin. For instance if the sensing margin for agent $i$ at $t_k$ from mode $r \rightarrow b$ (due to violation of sensing condition) as $\delta_i
:=
\min_{j \in \{j_1, j_2\}}
\left(R_{\text{crit}}-\|\boldsymbol p_i(t_k)-\boldsymbol p_j(t_k)\|
\right)$.
Assume the agent velocities are uniformly bounded as $\|\boldsymbol v_i(t)\|
\leq
v_{\max},
\qquad
\forall i\in\mathcal V.$ Then the maximum rate at which an agent can move toward the sensing boundary is bounded by $v_{\max}$. Therefore, to lose or gain the criticality of the neighbor connection, the inter-agent distance must increase or decrease by at least $\delta_i$, requiring a minimum time duration satisfying
\begin{align}
\tau_{d,i}
\geq
\frac{\delta_i}{2v_{\max}}.
\label{mindwellbound}
\end{align}
Thus, arbitrarily fast switching is excluded and the switching signal admits a strictly positive dwell time.

Next, integrating \eqref{exp_decay} over the interval $\Omega_k=[t_k,t_{k+1})$ yields
\begin{align}
V_i^{\sigma_i}(t_{k+1}^{-})
\leq
e^{-2\alpha_i(t_{k+1}-t_k)}
V_i^{\sigma_i}(t_k^{+}) + \sqrt\frac{\Delta_j^2}{2\alpha}.
\end{align}
Using the minimum dwell-time bound \eqref{mindwellbound},
\begin{align}
V_i^{\sigma_i}(t_{k+1}^{-})
\leq
e^{-2\alpha_i\tau_{d,i}}
V_i^{\sigma_i}(t_k^{+}) + \sqrt\frac{\Delta_j^2}{2\alpha}
\leq
e^{-2\alpha_i\frac{\delta_i}{2v_{\max}}}
V_i^{\sigma_i}(t_k^{+}) + \sqrt\frac{\Delta_j^2}{2\alpha}.
\label{dwell_decay_final}
\end{align}

Equation \eqref{dwell_decay_final} shows that the Lyapunov function exponentially decreases during each dwell-time interval before another switch can occur. Since the switching-induced jump satisfies
\begin{align}
V_i^{\sigma_i(t_k^+)}
-
V_i^{\sigma_i(t_k^-)}
\leq
D_i\sqrt{2V_i^{\sigma_i}(t_k^-)}
+
\frac{1}{2}D_i^2,
\end{align}
the switching transient remains bounded for finite $D_i$.

To complete the stability argument, we now combine the inter-switch ultimate bound and the switching-induced jump to obtain a recursive bound on the Lyapunov function at switching instants.

From the previous result, define the post-jump value at switching instant $t_k$ as
\begin{align}
V_i^{\sigma_i}(t_k^+)
\leq
V_i^{\sigma_i}(t_k^-)
+
D_i\sqrt{2V_i^{\sigma_i}(t_k^-)}
+
\frac{1}{2}D_i^2.
\end{align}

For any $V_i^{\sigma_i}(t_k^-)\geq V_{\min}>0$, apply Young's inequality to obtain
\begin{align}
D_i\sqrt{2V_i^{\sigma_i}(t_k^-)}
\leq
\epsilon V_i^{\sigma_i}(t_k^-)
+
\frac{D_i^2}{2\epsilon},
\qquad \epsilon>0.
\end{align}

Thus,
\begin{align}
V_i^{\sigma_i}(t_k^+)
\leq
(1+\epsilon)V_i^{\sigma_i}(t_k^-)
+
\frac{D_i^2}{2}\left(1+\frac{1}{\epsilon}\right).
\end{align}

Using $V_i^{\sigma_i}(t_k^-)\geq V_{\min}$, we normalize the additive term as
\begin{align}
\frac{D_i^2}{2}\left(1+\frac{1}{\epsilon}\right)
\leq
\frac{D_i^2}{2V_{\min}}\left(1+\frac{1}{\epsilon}\right)V_i^{\sigma_i}(t_k^-).
\end{align}

Hence, the switching jump admits the multiplicative bound $V_i^{\sigma_i}(t_k^+)
\leq
\zeta_i V_i^{\sigma_i}(t_k^-),$ where\\
$\underbrace{\zeta_i
:=
(1+\epsilon)
+
\frac{D_i^2}{2V_{\min}}\left(1+\frac{1}{\epsilon}\right).}_{computed}$ Combining this with the inter-switch ultimate bound \eqref{dwell_decay_final}, we obtain
\begin{align}
V_i^{\sigma_i}(t_{k+1}^-)
\leq
\zeta_i
e^{-2\alpha_i\frac{\delta_i}{2v_{\max}}}
V_i^{\sigma_i}(t_k^-) + \zeta_i \sqrt\frac{\Delta_j^2}{2\alpha}.
\end{align}

Define the contraction factor $r_i
:=
\zeta_i
e^{-2\alpha_i\frac{\delta_i}{2v_{\max}}}.$ For ultimate bound under switching, it is sufficient that $r < 1.$ This yields the explicit gain and dwell-time condition $2\alpha_i\frac{\delta_i}{2v_{\max}}
>
\ln(\zeta_i).$ Substituting the definition of $\alpha_i$, the sufficient conditions on the control gains become $\kappa_{i1}
>
1
+
\frac{v_{\max}}{\delta_i}
\ln(\zeta_i),$ and $\kappa_{i2}
>
1 + \frac{\kappa_{i1}}{2}
+
\frac{v_{\max}}{2\delta_i}
\ln(\zeta_i).$ Therefore, we obtain the dwell-time bound induced by the sensing margin, angular margin, and the gain conditions above.
Therefore, the switching signal satisfies a practical dwell-time condition and a bounded switching-based Lyapunov jump.  
Moreover, $\max\{\|\boldsymbol{p}_{ij_1}(t)\|,\|\boldsymbol{p}_{ij_2}(t)\|\}<\eta<R_{\text{crit}}$ ensures a positive sensing margin $R_{\text{crit}}-\eta$, and \cl{$\sin(\phi^r) > \sin(\nu_{\text{safe}})$ ensures a positive hysteresis margin}, yielding an implicit dwell time. Although switching causes discontinuous changes in the target definition, the corresponding Lyapunov jump remains bounded as $V_i(t_k^+)\le \zeta_iV_i(t_k^-)$ for some finite $\zeta_i\ge1$ (as agents operate in confined workspace with limited speed), while within each dwell interval the Lyapunov function decays asymptotically as $\dot V_i < 0$, ensuring that the decay dominates the bounded switching transient (as in \cite{liberzon2003switching}[Section 3.2]).\\
\cl{Furthermore, agent $i$ remains bounded away from collinearity with $j_1,j_2$ for all $t$. Using the
triangle-area identity, the angle $\nu_i(t)$ subtended by $j_1,j_2$ at agent $i$ satisfies $\sin\nu_i(t) = \frac{d(t) . h_i(t)}{\|\boldsymbol{p}_{ij_1}(t)\|\,\|\boldsymbol{p}_{ij_2}(t)\|}$, where $h_i(t)$ is the agent $i$'s perpendicular distance from $L_{j_1j_2}$. From the uniform
ultimate bound established in the proof of Theorem \ref{FTSC: thm/QP}, together with the bounded
Lyapunov jump $V_i(t_k^+)\le \zeta_iV_i(t_k^-)$ at switching instants, the tracking error satisfies
$\rho(t) := \|\boldsymbol{\xi}_{ip}(t)\| \le \zeta_i\sqrt{ \Delta_j^2/2\alpha}$, so 
$h_i(t) \geq h^r(d) - \rho(t)$, for the worst-case when $\rho(t)$ is aligned with the perpendicular to $L_{j_1j_2}$. Using
$\|\boldsymbol{p}_{ij_1}(t)\|,\|\boldsymbol{p}_{ij_2}(t)\| \le R_{\text{crit}}$, this yields $\sin\nu_i(t) \geq \frac{d\big(h^r(d) - \rho(t)\big)}{R_{\text{crit}}^2}.$
Thus, with the sufficient condition on the design parameters satisfying $\zeta_i\sqrt{ \Delta_j^2/2\alpha} < h^r(d) - \sin\nu_{\text{safe}}\, R_{\text{crit}}^2/d$, it
gives $\sin\nu_i(t) > \sin\nu_{\text{safe}} > 0$, so agent $i$ never crosses the angular
switching threshold nor becomes collinear with $j_1,j_2$. }

 Thus, the controller tracks $\boldsymbol{p}_{iT}^r$ to satisfy $\max\{\|\boldsymbol{p}_{ij_1}(t)\|,\ \|\boldsymbol{p}_{ij_2}(t)\|\} < R_{\text{crit}}$ in finite time by Definition \ref{rc_target} maintaining the reference links, \cl{preventing collinearity as $\sin (\nu_i(t)) \geq \sin (\nu_{\text{safe}})$}, and subsequently tracks $\boldsymbol{p}_{iT}^b$ to achieve the formation subtask $\mathcal{F}_i(\boldsymbol{g}(t))$. Consequently, by Theorem \ref{theo1}, the overall formation task $\mathcal{F}(\boldsymbol{g}(t))$ is achieved, solving Problem \ref{mainprob}.
\end{proof}

\section{Results \& Discussions}
\label{sec: results}
We validate our proposed approach, with simulation including two leader agents $\{1,2\}$ and six follower agents $\{3,4,5,6,7,8\}$ in a 2D space, connected as ${\mathcal{G}}(0) = (\{1,2,3,4,5,6,7,8\},\{(1,2),(3,1),(3,2),(4,1),(4,2),(5,1),\\(5,2),(6,1),(6,2),(7,3),(7,4),(8,5),(8,6)\})$, satisfying Assumption \ref{A1} and \cl{connected to their reference nodes (low-level neighbors) }. The formation task $\mathcal{F}(\boldsymbol{g}(t))$ is defined by the bearing configuration $\boldsymbol{g}^{d}$ to attain the desired shape shown in Fig. \ref{fig2}(a). The follower agents 3, 6, and 8 are modeled as a double integrator with $f_{v_i}=\boldsymbol{0}, h_{v_i}=I_2$ and agents 4, 5, and 7 are modeled as differential drive (or Ackermann drive with small slip angle), with $f_{v_i} = \boldsymbol{0}, h_{v_i} = \begin{bmatrix}
    \cos(\delta_i) & -\beta_i\sin(\delta_i)\\
    \sin(\delta_i) & \beta_i\cos(\delta_i)
\end{bmatrix}$, where $\delta_i$ is the robot's orientation, $\dot{\delta_i} = [0,\ 1]\boldsymbol{u}_i$,  $\beta_i = \|\boldsymbol{v}_i\|$ {(refer \cite{sawarkar2026sliding}[II B]).} For double integrator system, $\Gamma_{if} = 0, \Gamma_{ih} = 1$ and for differential drive system, $\Gamma_{if} = 0, \Gamma_{ih} = v_{max}$, where $v_{max}$ is the maximum velocity.
 Leader agents follow predetermined trajectories (shown by red and blue lines in Fig. \ref{fig2}), while follower agents aim to achieve a desired formation using Theorem \ref{FTSC: thm/QP}. We consider three case study: Case (i) \textit{Formation tracking without disturbance}: All agents successfully track the desired moving formation, as shown in Fig.~\ref{fig2}(a). Using Theorem~\ref{FTSC: thm/QP}, both bearing error and formation error converge to small values ( $ \approx 10^{-2}$), as illustrated in Fig.~\ref{fig2}(d). Case (ii) \textit{Formation tracking with disturbance}: Under a constant disturbance applied from 0--7 s, with perturbations $[-0.05, 0.04], [-0.01, 0.04]$ added to the positions of Follower 1, Follower 2 (Agent 3, 4), respectively, Follower 5 (Agent 7) loses connectivity with Follower 2, leading to a loss of network rigidity as $\det(\mathcal{B}_{ff}) \to 0$ (i.e., $\log_{10}(|\det(\mathcal{B}_{ff})|) \approx -12$, as shown in Fig.~\ref{fig2}(d)). Although the bearing error remains small, the mean formation error increases significantly due to the loss of rigidity, and Follower 7 deviates from the desired formation (Fig.~\ref{fig2}(b)). Case (iii) \textit{Formation tracking with disturbance and rigidity maintenance}: Under the same disturbance, a rigidity-critical link between Follower 4 and Follower 7 is preserved. As a result, the formation remains rigid and is successfully tracked, as shown in Fig.~\ref{fig2}(c). \\
\textit{Comparison with existing works}: Existing rigidity maintenance strategy such as \cite{zelazo2015decentralized} uses the rigidity eigenvalue $\mu_7 > 0$ of the symmetric rigidity matrix, requiring each agent to run a tenth-order dynamic estimator via power iteration and PI average consensus filters, with convergence requiring strict eigenvalue separation $|\mu_7 - \mu_8| > 0$ which cannot be guaranteed under time-varying topologies. Connectivity-based approaches~\cite{bhatia2025decentralized} maintain the Fiedler eigenvalue $\mu_2 > 0$, ensuring only graph connectivity, a strictly weaker condition than rigidity, since a connected graph need not satisfy $\det(\mathcal{B}_{ff}) \neq 0$. In contrast, the proposed framework avoids global spectral estimation entirely by identifying rigidity-critical reference nodes $j_1, j_2 \in \mathcal{N}_i$ locally for each follower.
\section{Conclusion}

In conclusion, a unified CBF-QP framework for formation tracking and network rigidity preservation of heterogeneous MAS was presented. 
Simulations confirmed that rigidity preservation is critical for accurate formation tracking under disturbances. 
\cl{Future work will extend the QP framework to 3D, incorporate obstacle avoidance, input bounds, and validate experimentally on aerial vehicles.}





\bibliographystyle{IEEEtran}
\bibliography{arxiv}

@article{saharsh2026cbfformation,
  title={Control Barrier Function only Formation Tracking in Multi-Agent Systems},
  author={Saharsh, S. and Jagtap, Pushpak},
  journal={arXiv preprint arXiv:2606.25452},
  year={2026 (accepted in IFAC World Congress 26)},
  doi={10.48550/arXiv.2606.25452 (accepted in IFAC World Congress 26)}
}

@article{franchi2012modeling,
  title={Modeling and control of UAV bearing formations with bilateral high-level steering},
  author={Franchi, Antonio and Masone, Carlo and Grabe, Volker and Ryll, Markus and B{\"u}lthoff, Heinrich H and Giordano, Paolo Robuffo},
  journal={The International Journal of Robotics Research},
  volume={31},
  number={12},
  pages={1504--1525},
  year={2012},
  publisher={SAGE publications Sage UK: London, England}
}

@article{bhatia2025decentralized,
  title={Decentralized formation control, collision avoidance and global connectivity maintenance using non-smooth barrier functions},
  author={Bhatia, Pranjal and Roy, Sayan Basu and Sujit, PB and Alvarez, Luis Mejias and McFadyen, Aaron},
  journal={Robotics and Autonomous Systems},
  pages={105272},
  year={2025},
  publisher={Elsevier}
}

@inproceedings{dias2016distributed,
  title={Distributed formation control of quadrotors under limited sensor field of view},
  author={Dias, Duarte and Lima, Pedro Urbano and Martinoli, Alcherio},
  booktitle={International Conference on Autonomous Agents \& Multiagent Systems},
  pages={1087--1095},
  year={2016}
}

@article{zelazo2015decentralized,
  title={Decentralized rigidity maintenance control with range measurements for multi-robot systems},
  author={Zelazo, Daniel and Franchi, Antonio and B{\"u}lthoff, Heinrich H and Robuffo Giordano, Paolo},
  journal={The International Journal of Robotics Research},
  volume={34},
  number={1},
  pages={105--128},
  year={2015},
  publisher={SAGE Publications Sage UK: London, England}
}

@article{sawarkar2026sliding,
  title={Sliding Mode Control for Safe Trajectory Tracking with Moving Obstacles Avoidance: Experimental Validation on Planar Robots},
  author={Sawarkar, Shubham and Sangeerth, P and Saharsh, S and Jagtap, Pushpak},
  journal={in Proceedings of 18th International Workshop on Variable Structure Systems},
  year={2026}
}

@book{liberzon2003switching,
  title={Switching in systems and control},
  author={Liberzon, Daniel},
  booktitle={Switching in systems and control},
  volume={190, 2003},
  year={Springer}
}

@inproceedings{ames2019control,

    title={Control barrier functions: Theory and applications},

    author={Ames, Aaron D and Coogan, Samuel and Egerstedt, Magnus and Notomista, Gennaro and Sreenath, Koushil and Tabuada, Paulo},

    booktitle={18th European control conference (ECC)},

    pages={3420--3431},

    year={2019}
}

@article{zhao2019bearing,
  title={Bearing-only formation tracking control of multiagent systems},
  author={Zhao, Shiyu and Li, Zhenhong and Ding, Zhengtao},
  journal={IEEE Transactions on Automatic Control},
  volume={64},
  number={11},
  pages={4541--4554},
  year={2019}
}

@article{zhao2015bearing,
  title={Bearing rigidity and almost global bearing-only formation stabilization},
  author={Zhao, Shiyu and Zelazo, Daniel},
  journal={IEEE Transactions on Automatic Control},
  volume={61},
  number={5},
  pages={1255--1268},
  year={2015}
}

@inproceedings{rayabagi2024formation,
  title={Formation control of double integrators over directed graphs using bearings and bearing rates},
  author={Rayabagi, Susmitha T and Pal, Debasattam and Mukherjee, Dwaipayan},
  booktitle={IEEE 63rd Conference on Decision and Control (CDC)},
  pages={6560--6565},
  year={2024}
}

@article{anderson2003operations,
  title={Operations on Rigid Formations of Autonomous Agents},
  author={Eren, Tolga and Anderson, Brian and Morse, A Stephen and Whiteley, Walter and Belhumeur, Peter N},
  journal={Communications in Information and Systems},
  volume={3},
  number={3},
  pages={223--258},
  year={2003}
}

@article{wu2023quadratic,
  title={Quadratic programming for continuous control of safety-critical multiagent systems under uncertainty},
  author={Wu, Si and Liu, Tengfei and Egerstedt, Magnus and Jiang, Zhong-Ping},
  journal={IEEE Transactions on Automatic Control},
  volume={68},
  number={11},
  pages={6664--6679},
  year={2023},
  publisher={IEEE}
}

@article{zhao2016localizability,
  title={Localizability and distributed protocols for bearing-based network localization in arbitrary dimensions},
  author={Zhao, Shiyu and Zelazo, Daniel},
  journal={Automatica},
  volume={69},
  pages={334--341},
  year={2016},
  publisher={Elsevier}
}

@inproceedings{pampatwar2021planar,
  title={Planar bearing-only formation control of heterogeneous multi-agent systems},
  author={Pampatwar, Akshay and Mukherjee, Dwaipayan},
  booktitle={2021 Seventh Indian Control Conference (ICC)},
  pages={171--176},
  year={2021},
  organization={IEEE}
}

\end{document}